\DeclareMathOperator{\e}{\mathrm{e}}
\DeclareMathOperator{\dd}{\mathrm{d\!}}
\newcommand{\df}[2]{\frac{\dd #1}{\dd #2}}
\newtheorem{tw}{Theorem}
\newtheorem{cor}[tw]{Corollary}
\newtheorem{prop}[tw]{Proposition}
\newtheorem{rem}[tw]{Remark}
\begin{document}
\preprint{APS/123-QED}

\title{Evolution of populations with strategy-dependent time delays}% Force line breaks with \\

\author{Jacek Mi\c{e}kisz}
%\altaffiliation[Also at ]{Physics Department, XYZ University.}%Lines break automatically or can be forced with \\
\email{miekisz@mimuw.edu.pl}
\author{Marek Bodnar}%
\email{mbodnar@mimuw.edu.pl}
\affiliation{%
    Institute of Applied Mathematics and Mechanics,\\
   University of Warsaw, Warsaw, Poland
}%

\date{\today}% It is always \today, today,
%  but any date may be explicitly specified

\begin{abstract}
We study effects of strategy-dependent time delays on equilibria of evolving populations. It is well known that time delays may cause oscillations in dynamical systems.  
Here we report a novel behavior. We show that microscopic models of evolutionary games with strategy-dependent time delays lead to a new type of replicator dynamics. 
It describes the time evolution of fractions of the population playing given strategies and the size of the population.
Unlike in all previous models, stationary states of such dynamics depend continuously on time delays. 
We show that in games with an interior stationary state (a globally asymptotically stable equilibrium in the standard replicator dynamics), at certain time delays, 
it may disappear or there may appear another interior stationary state. In the Prisoner's Dilemma game, for time delays of cooperation 
smaller than time delays of defection, there appears an ustable interior equilibrium and therefore for some initial conditions, 
the population converges to the homogeneous state with just cooperators.
\end{abstract}

\pacs{Valid PACS appear here}% PACS, the Physics and Astronomy
% Classification Scheme.
%\keywords{Suggested keywords}%Use showkeys class option if keyword
%display desired
\maketitle   
   
\section{Introduction}

Many social and biological processes can be modeled as systems of interacting individuals within the framework of evolutionary game theory 
\cite{maynard1,maynard2,weibull,redondo,hofbauerbook,young,cressman,nowak,sandholm}. 
The evolution of very large (infinite) populations can be then given by differential replicator equations which describe time changes of fractions of populations playing 
different strategies \cite{taylor,hofbauer,weibull,hofbauerbook}. It is usually assumed (as in the replicator dynamics) that interactions between individuals take place instantaneously 
and their effects are immediate. In reality, all social and biological processes take a certain amount of time. Results of biological interactions between individuals 
may appear in the future, and in social models, individuals or players may act, that is choose appropriate strategies, on the basis of the information concerning events in the past.
It is natural therefore to introduce time delays into evolutionary game models. 

It is well known that time delays may cause oscillations in dynamical systems \cite{ladas,gopalsamy,kuang,erneux}.
One usually expects that interior equilibria of evolving populations, describing coexisting strategies or behaviors, are asymptotically stable
for small time delays and above a critical time delay, where the Hopf bifurcation appears, they become unstable, 
evolutionary dynamics exhibits oscillations and cycles. Here we report a novel behavior - continuous dependence of equilibria on time delays.

Effects of time delays in replicator dynamics were discussed in \cite{tao,alboszta,aoku,ijima1,ijima2,moreira,wesol,matusz,wessonrand1,wessonrand2,nesrine1,nesrine2} 
for games with an interior stable equilibrium (an evolutionarily stable strategy \cite{maynard1,maynard2}). 
In \cite{tao}, the authors discussed the model, where individuals at time $t$ imitate a~strategy with a~higher average payoff 
at time $t-\tau$ for some time delay $\tau$. They showed that the interior stationary state of the resulting time-delayed differential equation 
is locally asymptotically stable for small time delays and for big ones it becomes unstable, there appear oscillations. 
In \cite{alboszta}, we constructed a different type of a model, where individuals are born $\tau$ units of time after their parents played. 
Such a model leads to system of equations for the frequency of the first strategy and the size of the population. 
We showed the absence of oscillations -- the original stationary point is globally asymptotically stable for any time delay. 
In both models, the position of the equilibrium is not affected by time delays.

Here we modify the second of the above models by allowing time delays to depend on strategies played by individuals,
we observe a new behavior of a dependence of equilibria on delays.   

Recently there were studied models with strategy-dependent time delays. In particular, Moreira et al. \cite{moreira} discussed a multi-player Stag-hunt game, 
Ben Khalifa et al. \cite{nesrine1} investigated asymmetric games in interacting communities, Wesson and Rand \cite{wessonrand1} studied Hopf bifurcations in two-strategy delayed replicator dynamics.
The authors generalized the model presented in \cite{tao}, they studied asymptotic stability of equilibria and the presence of bifurcations. Some very specific examples 
of three-player games with strategy-dependent time delays were studied in \cite{bmv}, a shift of interior equilibria was observed.

Here we present a systematic study of effects of strategy-dependent time delays on the long-run behavior of two-player games with two strategies in models which are generalizations of the one in \cite{alboszta}. We consider Stag-hunt-type games with two basins of attraction of pure strategies, Snowdrift-type games with a stable interior equilibrium (a coexistence of two strategies), 
and the Prisoner's Dilemma game. We report a novel behavior. We show that stationary states depend continuously on time delays. Moreover, at certain time delays, 
an interior stationary state may disappear or there may appear another interior stationary state. 

Below we present a general theory and particular examples. In the Appendix we provide proofs and additional theorems, 
in particular general conditions for the existence and uniqueness of interior states in Snowdrift-type games (Theorems A.3 and~A.6).

\section{Methods}
\subsection*{A. Replicator dynamics}
We assume that our populations are haploid, that is the offspring have identical phenotypic strategies as their parents.
We consider symmetric two-player games with two strategies, $C$~and $D$, given by the following payoff matrix:

\vspace{3mm}

\hspace{23mm} $C$  \hspace{2mm} $D$   

\hspace{15mm} $C$ \hspace{3mm} $a$  \hspace{3mm} $b$ 

$U$ = \hspace{6mm} 

\hspace{15mm} $D$ \hspace{3mm} $c$  \hspace{3mm} $d$,

\noindent where the $ij$ entry, $i,j = C, D$, is the payoff of the first (row) player when
it plays the strategy $i$ and the second (column) player plays the strategy $j$. 
We assume that both players are the same and hence payoffs of the column player are given 
by the matrix transposed to $U$; such games are called symmetric. 

Below we will consider all three main types of two-player games with two strategies: Stag-hunt, Snowdrift, and Prisoner's Dilemma game. 
They serve as simple models of social dilemmas. Strategies, C and D, may be interpreted as cooperation and defection. 

Let us assume that during a~time interval of the length $\varepsilon$, only an $\varepsilon$-fraction 
of the population takes part in pairwise competitions, that is plays games.
Let $p_{i}(t)$, $i=C, D,$ be the number of individuals playing at the time $t$ the strategy 
$C$ and $D$ respectively, $p(t)=p_{C}(t)+p_{D}(t)$ the total number of players, and
$x(t)=\frac{p_{C}(t)}{p(t)}$ the fraction of the population playing $C$.
Let 
\begin{equation}\label{wyplaty}
U_{C}(t)= ax(t)+b(1-x(t)) \quad \text{ and } \quad U_{D}(t)= cx(t)+d(1-x(t))  
\end{equation}
be average payoffs of individuals playing $C$~and $D$ respectively.

Now we would like to take into account that individuals are born some units of time after their parents played. 
We assume that time delays depend on strategies and are equal to $t-\tau_{C}$ or $t-\tau_{D}$ respectively. 

We propose the following equations:
\begin{align}
   p_{i}(t + \varepsilon) &= (1-\varepsilon)p_{i}(t) + \varepsilon p_{i}(t-\tau_{i})U_{i}(t-\tau_{i}); \qquad i = C, D, \label{bdiscrete1}\\
   p(t + \varepsilon)  &= (1-\varepsilon)p(t) + \varepsilon \Bigl(p_{C}(t-\tau_{C})U_{C}(t-\tau_{C}) + p_{D}(t-\tau_{D})U_{D}(t-\tau_{D})\Bigr).\label{bdiscrete2}
\end{align}

In the above we assume that populations are large in order to justify a continuous description of population sizes 
but they are not infinite as in classical replicator equations. The parameter $\varepsilon$ represents the time length of game interactions and therefore it multiplies the second terms in the above equations.
We assume the replacement of parents or equivalently a death rate $1$ and hence $p_{i}(t)$ and $p(t)$ are multiplied by $1-\varepsilon$. 

We divide (\ref{bdiscrete1}) by (\ref{bdiscrete2}) for $i=A$, obtain the equation for  $x(t+\varepsilon)\equiv x_{C}(t+\varepsilon)$, subtract $x(t)$, divide the difference by $\varepsilon$, 
take the limit $\varepsilon \rightarrow 0$, and get an equation for the frequency of the first strategy,
\begin{equation}\label{bxeq}
\frac{\dd x}{\dd t}= \frac{p_{C}(t-\tau_{C})U_{C}(t-\tau_{C})(1-x(t)) - p_{D}(t-\tau_{D})U_{D}(t-\tau_{D})x(t)}{p(t)}
\end{equation}
which can also be written as 
\begin{equation}\label{bxeqb}
\frac{\dd x}{\dd t}= \frac{x(t-\tau_{C})p(t-\tau_{C})U_{C}(t-\tau_{C}))(1-x(t)) - (1-x(t-\tau_{D}))p(t-\tau_{D})U_{D}(t-\tau_{D})x(t)}{p(t)}.
\end{equation}

Let us notice that unlike in the standard replicator dynamics, the above equation for the frequency of the first strategy is not closed, there appears in it a variable describing 
the size of the population at various times. One needs equations for populations sizes. From \eqref{bdiscrete1} and \eqref{bdiscrete2} we get

\begin{align}
   \frac{\dd p_{i}(t)}{\dd t} &= -p_{i}(t) +  p_{i}(t-\tau_{i})U_{i}(t-\tau_{i}); \; \; i~= C, D, \label{bcont1}\\
   \frac{\dd p(t)}{\dd t} &= -p(t) + \Bigl(p_{C}(t-\tau_{C})U_{C}(t-\tau_{C}) + p_{D}(t-\tau_{D})U_{D}(t-\tau_{D})\Bigr). \label{bcont2}
\end{align}

To trace the evolution of the population, we have to solve the system of equations, ((\ref{bxeqb}),(\ref{bcont2})),
together with initial conditions on the interval $[-\tau_M,0]$, where 
$\tau_M = \max\bigl\{\tau_{C},\tau_{D}\bigr\}$. We assume that 
\begin{equation}\label{bdic}
   x(t) = \varphi_x(t), \quad p(t) = \varphi_p(t), \quad \text{ for } \quad t\in [-\tau_M,0].
\end{equation}

We have the following proposition concerning the existence of non-negative solutions.

\begin{prop}
   If the initial functions $\varphi_x$ and $\varphi_p$ are continuous on $[-\tau_M,0)$ and non-negative, then there exists a unique non-negative 
   solution of the system~(\eqref{bxeqb}, \eqref{bcont2}) with initial conditions~\eqref{bdic} well defined on the interval $[0,+\infty)$.
\end{prop}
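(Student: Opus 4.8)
The plan is to prove existence and uniqueness by the \emph{method of steps}, exploiting a hidden cascade (triangular) linear structure of the system. Set $\tau_m = \min\{\tau_C,\tau_D\}$ and partition $[0,\infty)$ into intervals $I_n = [n\tau_m,(n+1)\tau_m]$. The key observation is that for $t \in I_n$ both delayed arguments satisfy $t-\tau_C \le (n+1)\tau_m - \tau_C \le n\tau_m$ and likewise $t-\tau_D \le n\tau_m$ (because $\tau_C,\tau_D \ge \tau_m$), so the delayed terms are already determined by the solution constructed on $[-\tau_M,n\tau_m]$ (on $I_0$ by the initial data~\eqref{bdic}). Hence on each $I_n$ the delayed system collapses to an ordinary initial value problem. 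Moreover this problem is triangular: the right-hand side of~\eqref{bcont2} contains only the current value $p(t)$ and known delayed quantities, so it reads $p'(t) = -p(t) + g_n(t)$ with $g_n$ a known continuous, non-negative function; and once $p$ is known, the numerator of~\eqref{bxeqb} is affine in $x(t)$, so~\eqref{bxeqb} becomes $x'(t) = \frac{A_n(t)}{p(t)} - \frac{A_n(t)+B_n(t)}{p(t)}\,x(t)$ with $A_n,B_n \ge 0$ known. Both equations are therefore scalar \emph{linear} ODEs on each step.

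First I would solve the $p$-equation on $I_n$ explicitly by variation of constants, starting from the value $p(n\tau_m)$ inherited from the previous step; this yields a unique continuous solution and, since $g_n\ge 0$, the bound $p(t) \ge p(n\tau_m)\,\e^{-(t-n\tau_m)}$, so that $p$ stays positive provided $\varphi_p(0)>0$. With $p$ in hand I would integrate the linear $x$-equation by the same integrating-factor formula, obtaining a unique continuous $x$. To see that $x$ remains in $[0,1]$ (so that the delayed coefficients feeding the next step stay non-negative) I would use the barrier structure: writing the $x$-equation as $x' = \frac{A_n(1-x)-B_n x}{p}$ one has $x'\ge 0$ whenever $x=0$ and $x'\le 0$ whenever $x=1$, which the integrating-factor representation turns into invariance of $[0,1]$; equivalently $y=1-x$ solves an equation of the same non-negative type. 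Induction over $n$ then propagates non-negativity of $p$ and the inclusion $x(t)\in[0,1]$ to all steps, and since each step has fixed length $\tau_m$ and the linear equations admit no finite-time blow-up (the birth term obeys $g_n(t)\le U_{\max}\bigl(p(t-\tau_C)+p(t-\tau_D)\bigr)$, giving at most exponential growth of $p$), the construction extends to all of $[0,\infty)$. Uniqueness follows because on each interval we solve genuine linear scalar ODEs with continuous coefficients, whose solutions are unique, and the method of steps transmits uniqueness forward.

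The main obstacle is the singular factor $1/p(t)$ in~\eqref{bxeqb}: the right-hand side is only locally Lipschitz where $p>0$, so everything hinges on keeping the population size strictly positive. The cascade ordering is what defuses this — because~\eqref{bcont2} does not involve the current $x(t)$, I can produce an explicit, strictly positive $p$ \emph{before} touching the $x$-equation, after which the latter is a regular (non-singular) linear ODE. The one genuinely delicate point is the consistency of the two invariance claims across steps: non-negativity of the coefficients $A_n,B_n$ in the $x$-equation relies on $x\in[0,1]$ at the delayed times, while the proof that $x\in[0,1]$ on $I_n$ relies on $A_n,B_n\ge 0$, so the induction must carry both facts simultaneously. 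I would also flag the degenerate extinction case $\varphi_p\equiv 0$ (or $\varphi_p(0)=0$), where the denominator vanishes and the reduction breaks down, as the boundary situation that the positivity of $\varphi_p$ is there to exclude.
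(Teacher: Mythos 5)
Your proposal is correct and takes essentially the same route as the paper's own proof: the method of steps combined with the observation that on each step the delayed system collapses to a triangular pair of scalar linear ODEs --- solve the linear $p$-equation first, then the $x$-equation, which is linear in $x(t)$. The differences are only in level of detail: you prove the invariance of $[0,1]$ for $x$, the positivity of $p$, and the no-blow-up bound directly (where the paper cites standard delay-equation theory and a non-negativity reference), you correctly take the step length $\tau_m=\min\{\tau_C,\tau_D\}$ where the paper loosely writes the interval $[0,\tau_M]$ with $\tau_M=\max\{\tau_C,\tau_D\}$, and your remark that $\varphi_p(0)=0$ makes the $x$-equation singular identifies a boundary case the paper leaves implicit.
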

\begin{proof}
   The local existence of the solution follows immediately from a standard theory of delay differential equations, ~\cite{kuang}. The non-negativity 
   follows from~\cite{bodnar1}. To prove the global existence it is enough to use the step method and to observe that on the interval
   $\bigl[0,\tau_M\bigr]$ the system~(\eqref{bxeqb}, \eqref{bcont2}) becomes a system of non-autonomous ordinary differential equations. The 
   equation for $p$ becomes linear, it can be solved and then the equation for $x$ becomes linear with respect to $x(t)$. 
\end{proof}

\subsection*{B. Stationary states}

We derive here an equation for an interior stationary state (the stationary frequency of the first strategy). Let us assume that there exists a stationary 
frequency $\bar x$ such that $x(t) = \bar x$ for all $t\ge 0$ and for some suitably chosen function $p(t)$. 
Then average payoffs of each strategy are constant and are equal to 
\begin{equation}\label{stat:payoff}
   \bar U_C = a \bar x + b(1-\bar x) \quad \text{ and } \quad \bar U_D = c \bar x + d (1-\bar x).
\end{equation}
Thus, equation~\eqref{bcont2} becomes a linear delay differential equation, 
\begin{equation}\label{eqss:p}
\df{p}{t} = \bar x \Bigl(a\bar x+b(1-\bar x)\Bigr) p(t-\tau_C) + (1-\bar x) \Bigl(c\bar x + d(1-\bar x)\Bigr) p(t-\tau_D) - p(t).
\end{equation}
Note, that solutions of~\eqref{eqss:p} with non-negative initial conditions are non-negative. This implies that the leading eigenvalue of 
this equation is real. The eigenvalues $\lambda$ of \eqref{eqss:p} satisfy
\begin{equation}\label{eqss:nalambda}
   \lambda +1 = \bar x \bar U_C e^{-\lambda\tau_C} + (1-\bar x) \bar U_D \e^{-\lambda\tau_D}.
\end{equation}
Assume now that $\lambda$ is a solution of~\eqref{eqss:nalambda} (of course $\lambda$ depends on $\bar x$) and $p(t) = p_0 \exp(\lambda t)$ for some $p_{0}$. 

We plug such $p$ into \eqref{bxeq} and we get two stationary solutions of~\eqref{bxeq} (i.e. such that the right-hand side of~\eqref{bxeq} is equal to 0), 
$\bar x =0, 1$, and possibly interior ones - solutions to
\begin{equation}\label{stateq}
   \bar U_C \e^{-\lambda\tau_C} = \bar U_D\e^{-\lambda\tau_D}.
\end{equation}

If $\tau_C=\tau_D$ and $d<b<a<c$, then \eqref{stateq} gives us a mixed Nash equilibrium (an evolutionarily stable strategy, \cite{maynard1,maynard2}) of the game,
\[
   x^* = \frac{b-d}{b-d+c-a}
\]
which represents the equilibrium fraction of an infinite population playing $C$ \cite{weibull,hofbauerbook}.
In the non-delayed replicator dynamics, $x^{*}$ is globally asymptotically stable and there are two unstable stationary states: $x=0$ and $x=1$.   

If $\tau_C\neq\tau_D$, then $\lambda = \ln(\bar U_C/ \bar U_D) /(\tau_C-\tau_D)$. We plug it into~\eqref{eqss:nalambda} and we conclude that 
$\bar x$ satisfies an equation $F(\bar x)=0$, where 
\begin{equation}\label{funss}
   F(x) = \frac{1}{\tau_C-\tau_D}\ln\left(\frac{a x + b(1-x)}{c x + d(1-x)}\right) +1 - 
\frac{\Bigl(c x + d(1- x)\Bigr)^{\tau_C/(\tau_C-\tau_D)}}{\Bigl(a x + b(1-x)\Bigr)^{\tau_D/(\tau_C-\tau_D)}}.
\end{equation}
From the above it follows the following proposition that explains in what sense $\bar x$ is a stationary state of the replicator dynamics (\eqref{bxeqb}, \eqref{bcont2}).
\begin{prop}
   Assume that $\tau_C\neq \tau_D$. Let $\bar x$ be a solution to $F(x)=0$, where $F$ is defined by~\eqref{funss} and let 
   \[
      \bar\lambda =  \frac{1}{\tau_C-\tau_D}\ln\left(\frac{a \bar x + b(1-\bar x)}{c \bar x + d(1-\bar x)}\right).
   \]
   Then the functions 
   \[
      x(t) = \bar x, \quad p(t) = p_0 \e^{\bar\lambda t}, \quad t\ge 0
   \]
   are solutions of the system~(\eqref{bxeqb}, \eqref{bcont2}) with the initial conditions
   \[
      \varphi_x(t) = \bar x, \quad \varphi_p(t) = p_0\e^{\bar \lambda t}, \quad \text{ for } \quad t\in [-\tau_M,0].
   \]
\end{prop}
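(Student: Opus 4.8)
The plan is to verify the claim by direct substitution of the candidate functions into the two governing equations. Since $x(t)\equiv\bar x$ is constant, the average payoffs in~\eqref{wyplaty} reduce to the constants $\bar U_C$ and $\bar U_D$ of~\eqref{stat:payoff}, and the partial population sizes are $p_C(t)=\bar x\,p_0\e^{\bar\lambda t}$ and $p_D(t)=(1-\bar x)\,p_0\e^{\bar\lambda t}$. Because the candidate is prescribed by the same formulas on $[-\tau_M,0]$ and on $[0,\infty)$, the delayed arguments $x(t-\tau_i)$ and $p(t-\tau_i)$ are evaluated consistently for every $t\ge 0$, so no matching condition at $t=0$ needs separate checking.

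First I would check the frequency equation~\eqref{bxeqb}. Its left-hand side vanishes because $x$ is constant. Substituting the candidate into the right-hand side and dividing by $p(t)=p_0\e^{\bar\lambda t}$ leaves $\bar x(1-\bar x)\bigl(\bar U_C\e^{-\bar\lambda\tau_C}-\bar U_D\e^{-\bar\lambda\tau_D}\bigr)$. The factor in parentheses is the difference of the two sides of~\eqref{stateq}, which vanishes: the definition $\bar\lambda=\ln(\bar U_C/\bar U_D)/(\tau_C-\tau_D)$ is equivalent to $\ln\bar U_C-\bar\lambda\tau_C=\ln\bar U_D-\bar\lambda\tau_D$, i.e.\ to~\eqref{stateq}. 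Hence the right-hand side of~\eqref{bxeqb} is zero and the equation holds.

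Next I would check the population equation~\eqref{bcont2}. Inserting $p(t)=p_0\e^{\bar\lambda t}$, $p_C(t-\tau_C)=\bar x\,p_0\e^{\bar\lambda(t-\tau_C)}$ and $p_D(t-\tau_D)=(1-\bar x)\,p_0\e^{\bar\lambda(t-\tau_D)}$, and dividing through by $p_0\e^{\bar\lambda t}$, reduces the equation to $\bar\lambda+1=\bar x\bar U_C\e^{-\bar\lambda\tau_C}+(1-\bar x)\bar U_D\e^{-\bar\lambda\tau_D}$, which is the characteristic relation~\eqref{eqss:nalambda} evaluated at $\bar\lambda$. Using~\eqref{stateq} both delayed terms collapse to the common value $V=\bar U_C\e^{-\bar\lambda\tau_C}$, so the required identity becomes $\bar\lambda+1=V$. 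The last, and in my view the only nonroutine, step is to show that this coincides with $F(\bar x)=0$: writing $\e^{-\bar\lambda\tau_C}=(\bar U_C/\bar U_D)^{-\tau_C/(\tau_C-\tau_D)}$ from the definition of $\bar\lambda$ and simplifying the exponents gives $V=\bar U_D^{\tau_C/(\tau_C-\tau_D)}\big/\bar U_C^{\tau_D/(\tau_C-\tau_D)}$, so $\bar\lambda+1=V$ is exactly the vanishing of~\eqref{funss}. Since $\bar x$ solves $F(\bar x)=0$ by hypothesis, \eqref{bcont2} holds, completing the verification. The main obstacle is thus purely the bookkeeping of exponents that links the characteristic equation~\eqref{eqss:nalambda} to the scalar equation~\eqref{funss}; everything else is immediate from the constancy of $x$.
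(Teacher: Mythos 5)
Your verification is correct and follows essentially the same route as the paper: the text preceding the proposition derives $F$ precisely by combining the characteristic relation~\eqref{eqss:nalambda} (which makes $p_0\e^{\bar\lambda t}$ solve~\eqref{bcont2}) with the stationarity condition~\eqref{stateq} (which annihilates the right-hand side of~\eqref{bxeqb}), and your substitution argument is that derivation run in the sufficiency direction, including the exponent bookkeeping that identifies $\bar\lambda+1=\bar U_D^{\tau_C/(\tau_C-\tau_D)}\big/\bar U_C^{\tau_D/(\tau_C-\tau_D)}$ with $F(\bar x)=0$. Nothing is missing.
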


In this paper, we consider only examples of games with a positive $\lambda$, that is our populations grow exponentially with time hence the use of differential equations
to describe time evolution is appropriate. For a negative $\lambda$, the population gets extinct.  

Further properties of the function $F$ related to the existence of zeros of this function in the interval $(0,1)$ are analyzed in Subsection~\ref{app1} in the Appendix.
In particular, we provide in Theorems A.3 and A.6 general conditions for the existence and uniqueness of the interior stationary state in Snowdrift-type games.
\vspace{2mm}

\section{Results}

\subsection{Stag-hunt games}
Here we consider games with a unique unstable interior equilibrium. We begin with a general Stag-hunt game given by the following payoff matrix:
\[
U_1 = \begin{bmatrix}
a & 0 \\ \beta a  & \beta a
\end{bmatrix},
\]
where $a>0$ and $\beta\in (0,1)$. In this case the function $F$ takes the form

\begin{equation}\label{fsh}
F(x) = 1 +\frac{\alpha}{\tau_C} \ln \frac{x}{\beta} - a \beta^{\alpha} x^{1-\alpha}, \quad \alpha = \frac{\tau_C}{\tau_C-\tau_D}.
\end{equation}

Note that if $\bar x \in(0,1)$ is an interior stationary frequency, then the leading eigenvalue is given by the following formula (see Proposition 2), 
\begin{equation}\label{lambdash}
\bar \lambda = \frac{1}{\tau_C-\tau_D} \ln \frac{\bar x}{\beta}. 
\end{equation}
Thus $\bar \lambda>0$ if and only if 
\begin{itemize}
   \item $\tau_C>\tau_D$ and $\bar x \in (\beta, 1)$ 
   \item[] or
   \item $\tau_C<\tau_D$ and $\bar x \in (0,\beta)$.
\end{itemize}
For $\tau_C>\tau_D$ it is easy to see that $F$ is increasing from $-\infty$ to $F(1)=1-a\beta^{\alpha}$. Thus, the interior stationary frequency $\bar x\in (0,1)$ exists 
if and only if $F(1)>0$ which is equivalent to the inequality $\displaystyle \beta< a^{-1/\alpha}$. The leading eigenvalue is positive if $\bar x>\beta$, 
which is equivalent to the inequality $F(\beta)<0$. This gives a condition $a\beta >1$. Finally, we get here a necessary condition for the existence of an interior stationary frequency 
with a positive leading eigenvalue:

\begin{equation}\label{exist}
\frac{1}{a}<\beta<\frac{1}{a^{1/\alpha}} \; \Longrightarrow \; a>1. 
\end{equation}
For $\tau_C<\tau_D$ it is easy to see that $F$ is decreasing from $\infty$ to $F(1)=1-a\beta^{-|\alpha|}$. Thus, the interior stationary frequency $\bar x\in (0,1)$ exists if and only if $F(1)<0$ 
which is equivalent to the inequality $\displaystyle \beta< a^{1/|\alpha|}$. Similarly, the leading eigenvalue is positive if $\bar x<\beta$, which is equivalent to the inequality $F(\beta)<0$. 
This gives a condition $a\beta >1$. Again we see that a necessary condition for the existence of an interior stationary frequency with a positive leading eigenvalue is $a>1$. 

For general Stag-hunt games we can derive an explicit formula for the interior stationary state $\bar x$ using the Lambert $W$ function $W_p$. Namely we solve $F(x)=0$, where $F$ is given by~\eqref{fsh}, 
and after some algebraic caluculations we get

\begin{equation}\label{lambertsh}
\bar x = \left(\frac{\tau_Da \beta^{\frac{\tau_C}{\tau_C-\tau_D}}}{W_p\left(\tau_D\beta a e^{\tau_D}\right)} \right)^{\frac{\tau_C-\tau_D}{\tau_D}}
\end{equation}

For $a=5$ and $\beta=3/5$ we get the classical Stag-hunt game.

In Fig.~\ref{fig.stag_hunt} we present the dependence of the unstable interior stationary state on time delays. We see that the bigger a time delay of a given strategy is, the smaller its basin of atraction. 
Moreover, if the time delay of the strategy $C$ increases, at certain point, the interior stationary state ceases to exist and therefore the strategy $D$ becomes globally asymptotically stable
(the internal equilibrium hits $x=1$, destabilizing this solution). 
For $\tau_C = \tau, \tau_D = 2\tau$ we get $\alpha = -1$ and therefore (14) simplifies and in the limit $ \tau \to \infty$ one obtains  $\bar x = \sqrt{\frac{\beta}{a}}$.
For the classical Stag-hunt game, $\bar x = \frac{\sqrt{3}}{5} \approx 0.346$ as it is seen in Panel A.
 
Panel~D of Fig.~\ref{fig.stag_hunt} indicates the existence of a curve $\tau_D^*(\tau_C^*)$ such that for $\tau_C>\tau_{C}^*$  and fixed $\tau_D=\tau_{D}^*$, there is no interior stationary frequency and for $\tau_C<\tau_{C}^*$ there is a unique one. The curve $\tau_D^*(\tau_C^*)$ that splits the plane $(\tau_C,\tau_D)$ into two regions (below this curve there is no interior stationary frequency, above it there is a unique one) is given by the following formula,
\begin{equation}\label{StagHuntGranica}
\tau_{D}^* = \tau_{C}^*\left(1+\frac{\ln \beta}{W_p(a\tau_{C}^*\e^{\tau_C^*})-\tau_C^*}\right), 
\quad \tau_C^* \ge \frac{-\ln \beta}{a\beta -1}.
\end{equation}
The formula can be easily obtained from $F(1)=0$ (detailed calculations can be found in Subsection~\ref{app2} in the Appendix). For the classical Stag-hunt game, the curve given by (18) is almost a straight line. Its derivative with respect to $\tau_C^*$ changes from 
\[
\frac{a\beta -1}{a\beta-1-a\beta \ln \beta}\approx  0.566 
\]
for $\tau_C^* = \frac{-\ln \beta}{a\beta-1} \approx 0.255$
to 
\[
1+\frac{\ln\beta}{\ln a} \approx 0.683
\]
for $\tau_C^*\to \infty$. 

Note also, that if $\tau_C$ is fixed, and we take $\tau_D\to+\infty$, then we get $\alpha\to 0$ and the formula~\eqref{fsh} simplifies to 
$ 1-a x$, which implies that the interior stationary frequency for sufficiently large $\tau_D$ is close to $\frac{1}{a}$. 

Let us summarize the results. When we parametrize both delays by $\tau$, if the delay of defection is bigger than that of cooperation, then in the limit of infinite $\tau$, the interior unstable equilibrium tends to some value, both strategies have some basin of attraction, see Panel A. However, if the delay of cooperation is bigger than that of defection, then above some critical $\tau$, the unstable interior equilibrium ceases to exist and the defection becomes globally asymptotically stable, see Panel C. Under no circumstances cooperation becomes globally asymptotically stable.

\begin{figure}
   \centerline{\includegraphics[]{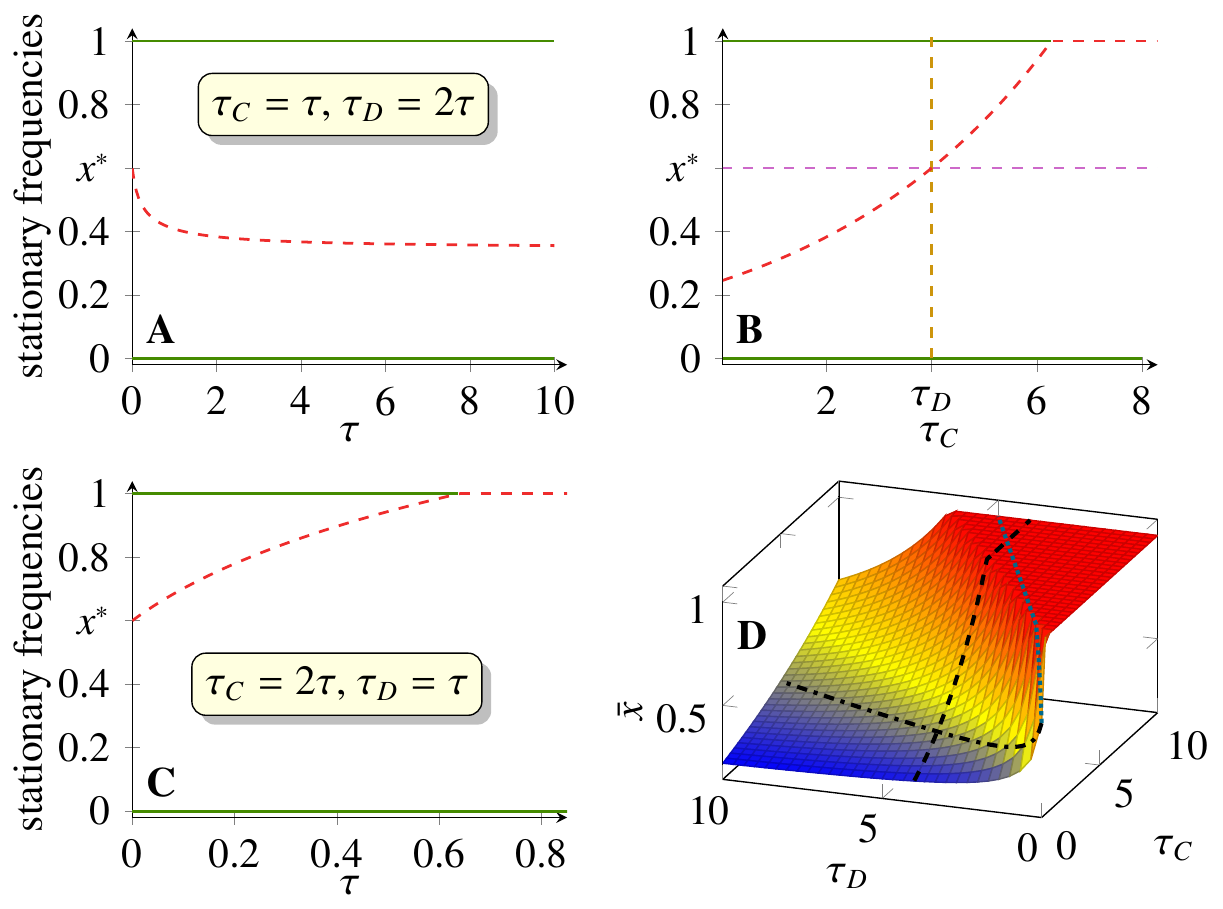}}
   \caption{Numerical solutions of $F(x)=0$ for the matrix $U_1$ for the classical Stug-hunt game, that is for $a=3$ and $\beta=3/5$. Panel \textbf{A}: the interior stationary state as a function of $\tau$, when $\tau_C=\tau$, $\tau_D=2\tau$. 
      Panel \textbf{B}: the interior stationary state as a function of $\tau_C$, while $\tau_D=4$ is fixed. Panel \textbf{c}: the interior stationary state as a function of $\tau$, 
      when $\tau_C=2\tau$, $\tau_D=\tau$. Panel \textbf{D}: the interior stationary state as a function of $\tau_C$ and $\tau_D$. The dash-dotted line indicates the cross-section presented on Panel \textbf{A}, the  black dotted line indicates the cross-section presented on Panel \textbf{B}, while the dotted line indicates the cross-section presented on Panel \textbf{C} .\label{fig.stag_hunt}}
\end{figure}

\subsection{Snowdrift-type games}
We consider here two examples of games with a unique stable interior equilibrium state. Because of nonlinearity of our equations we could not treat analytically general cases.
\vspace{2mm}

\noindent {\bf Example 1}
\vspace{2mm}

Here we discuss a classical Snowdrift game (with the reward equal to 5 and the cost equal to 2) with the following payoff matrix:
\[
U_2 = \begin{bmatrix}
4 & 3 \\ 5 & 0
\end{bmatrix}
\]

We have that $x^*=0.75$ is the stable stationary state of the non-delayed replicator dynamics. 

When delays are parametrized in the following form, $\tau_C=\tau, \tau_{D}=2\tau$ and $\tau$ increases, then the interior stationary state increases until it disappears 
at some value of $\tau$. Above that point, $x=1$, the equilibrium when all players cooperate, becomes globally asymptotically stable (see Fig.~\ref{fig.1}A). In the case of $\tau_C=2\tau, \tau_{D}=\tau$, 
the interior stationary state decreases and approaches the value $\frac{1+\sqrt{301}}{50}\approx0.37$ (see Fig.~\ref{fig.1}C).

We see in Fig.~\ref{fig.1}B, that for fixed $\tau_{D}=2$, there is a value of $\tau_{C}$ below which there is no interior stationary point; $x=1$ is globally asymptotically stable.
Above this point, the stable interior stationary state decreases. 

In Fig.~\ref{fig.1}D we present the unstable interior stationary state as function of $\tau_C$ and $\tau_D$. If $\tau_{C}=\tau_D$,
then the stable stationary frequency does not depend on time delay and is equal to $x^*=0.75$. For fixed $\tau_D$, 
if $\tau_{C}\gg \tau_D$, the stationary state decreases approaching the value $0.2$ (that can be easily calculated taking a limit $\tau_C\to+\infty$ in~\eqref{funss}). 
Similarly, analyzing the graph of the function $F$ for $\tau_D=0$, one can easily find that the stationary state decreases in this case from $0.75$ for $\tau_C=0$ to $0.2$ 
for $\tau_C\to+\infty$. On the other hand, if we fix $\tau_D$, then the stationary state is a decreasing function of $\tau_C$ (see Fig.~\ref{fig.1}B).  
Additionally, if $\tau_D$ is large enough (greater than $\ln(5/4)/3$, see Remark~\ref{rem:d0} in the Appendix), then there exists a value $\tau_C^*$ of $\tau_C$, 
such that for $\tau_C<\tau_C^*$ there exists no interior stable frequency and all individuals are playing C. 
The critical value $\tau_D^* = \frac{1}{3}\ln\frac{5}{4}$ for $\tau_C=0$ depends on $\tau_C$ almost linearly (see the implicit formula~\eqref{tauAstar} in the Appendix for the relation between~$\tau_{C}^*$ and $\tau_{D}^*$). 
\vspace{2mm}

We see that the situation here is in a sense a reverse one to that in the Stag-hunt game. When we parametrize both delays by $\tau$, if the delay of defection is bigger than that of cooperation,
then above some critical $\tau$, the stable interior equilibrium ceases to exist and the cooperation becomes a globally asymptotically stable equilibrium, see Panel A.
However, if the delay of cooperation is bigger than that of defection, then in the limit of infinite $\tau$, the stable interior equilibrium tends to some value, both strategies cooexist, see Panel C. 
Under no circumstances defection becomes globally asymptotically stable.

\begin{figure}
   \centerline{\includegraphics[]{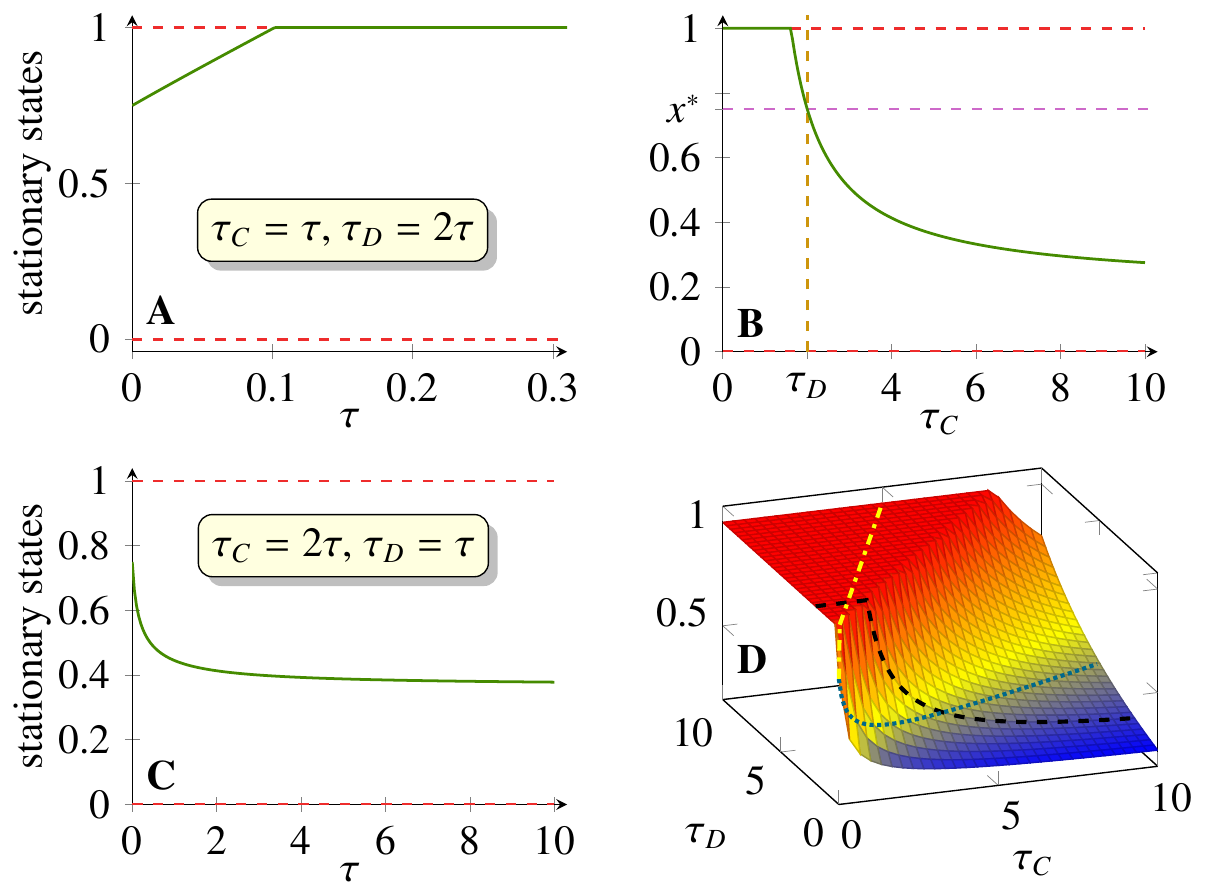}}
   \caption{Numerical solutions of $F(x)=0$ for the matrix $U_2$ for the classical Snowdrift game. Panel \textbf{A}: the unstable interior stationary state as a function of $\tau$, when $\tau_C=\tau$, $\tau_D=2\tau$. 
      Panel \textbf{B}: the interior stationary state as a function of $\tau_C$, while $\tau_D=2$ is fixed. Panel \textbf{C}: the interior stationary state as a function of $\tau$, 
when $\tau_C=2\tau$, $\tau_D=\tau$. Panel \textbf{D}: the interior stationary state as a function of $\tau_C$ and $\tau_D$.
The yellow dash-dotted line indicates the cross-section presented on Panel \textbf{A}, the black dotted line indicates the cross-section presented on Panel \textbf{B}, while the dotted line indicates the cross-section presented on Panel \textbf{C} }.\label{fig.1}
\end{figure}
\vspace{2mm}

\noindent \textbf{Example 2}
\vspace{2mm}

Here we study another particular example which illustrates a complex behaviour of games with asymmetric time delays.
The game has the following payoff matrix:
\[
U_3 = \begin{bmatrix}
2 & 0.5 \\ 2.95 & 0
\end{bmatrix}
\]
Now, $x^*\approx 0.345$ is the stable stationary state of the non-delayed replicator dynamics. 

In Fig~\ref{fig.2}A we set $\tau_C=\tau$, $\tau_D=2\tau$. We see that there exists a threshold $\tau^*\approx 1.1$ such that for $\tau<\tau^*$ 
there exists a unique interior stationary state. Numerical simulations suggest that this state is stable. For $\tau>\tau^*$  and $\tau< \tau^**$ there are two interior stationary states,
one of them stable, the other one unstable. Notice that $x=1$ is another stable equilibrium.
Numerical simulations suggest that if the initial frequency of the strategy C is large enough, then the D strategy is eliminated. 
Let us also observe that there two asymptotically stable equilibria in the system. 
Cooperation becomes an alternative equilibrium, then the only equilibrium when both delays increase.

\begin{figure}[htb]
   \centerline{\includegraphics[]{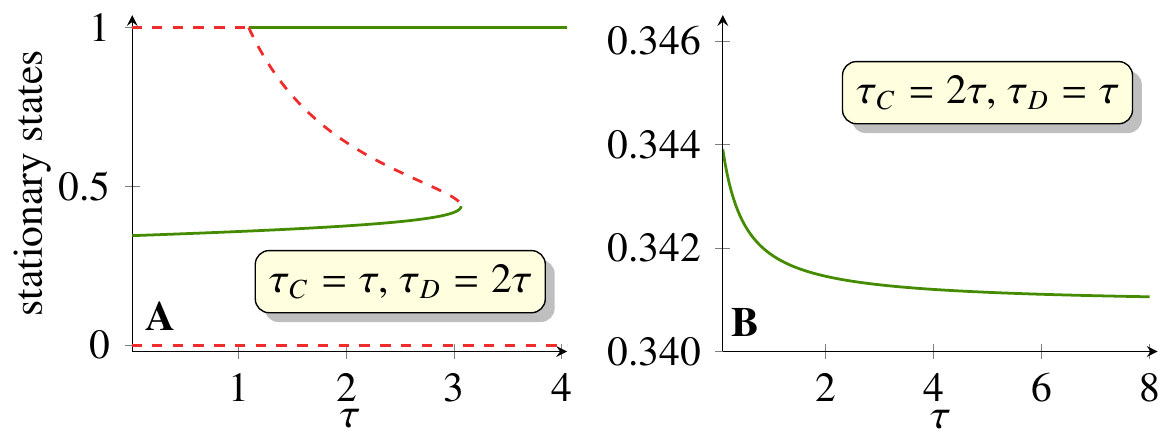}}
   \caption{Numerical solutions of $F(x)=0$ for the matrix $U_3$: the stable interior stationary state as a function of $\tau$, when $\tau_C=\tau$, $\tau_D=2\tau$ (Panel \textbf{A}), and 
      when $\tau_C=2\tau$, $\tau_D=\tau$ (Panel \textbf{B}). \label{fig.2}}
\end{figure}

In Fig.~\ref{fig.2}B we present the stable interior stationary state as a function of $\tau$,  $\tau_C=2\tau$, $\tau_D=\tau$. 
In this case, the stationary state is almost constant. In fact, for this set of parameters, one can easily see that the function $F$ is decreasing to one 
as both terms of $F$ decrease. Because only the first term depends on $\tau$ (and it decreases with increasing $\tau$), one can deduce
that the stationary state is a decreasing function of $\tau$. For $\tau\to+\infty$ it converges to the positive solution of the quadratic equation 
$(2.95 x)^2 - 1.5 x - 0.5 = 0$, so it is approximately equal to $0.341$. Thus, $\bar x\in [0.341,0.345]$, so it is almost constant. 
Here the states $1$ and $0$ are unstable.

\subsection{Prisoner's Dilemma game}

Here we consider the Prisoner's Dilemma game with the following payoff matrix:
\[
U_4 = \begin{bmatrix}
3 & 0 \\ 5 & 1
\end{bmatrix}.
\]

The function $F$ takes the following form,

\begin{equation}\label{fpd}
F(x) = \frac{1}{\tau_C-\tau_D} \ln\left( \frac{3x}{4x+1}\right) +1 - 3x \left(\frac{4x+1}{3x}\right)^{\frac{\tau_C}{\tau_C-\tau_D}}.
\end{equation}

It is easy to see that $\lim\limits_{x\to0^+}F(x)<0$ for $\tau_C>\tau_D$ and $\lim\limits_{x\to0^+}F(x)>0$ for $\tau_C<\tau_D$. However, the sign of 

\begin{equation}\label{fpd1}
F(1) = 1 - \frac{1}{\tau_C-\tau_D}\ln \frac{5}{3} - 3 \left(\frac{5}{3}\right)^{\frac{\tau_C}{\tau_C-\tau_D}}
\end{equation}
cannot be easily determined. Plots of the function $F$ suggest that it has no zeros inside the interval $[0,1]$ for $\tau_C>\tau_D$ and it can have one for some set of delays  $\tau_C<\tau_D$. 
However, we can deduce that $F(1)>0$ if $\tau_D$ is sufficiently large, thus then there exists an interior stationary frequency. In the Panel A of Fig.~\ref{fig.dylemat2} 
we plotted the value of this interior stationary frequency for $\tau_D=1$. The limiting values of $\tau_C^*$ and $\tau_D^*$ are given by the implicit formula $F(1) = 0$ which can be solved, 

\begin{equation}\label{solution}
\tau_D^* = \tau_C^* \left(1- \frac{\ln\frac{3}{5}}{\ln\left(\frac{1}{3\tau_C^*}W_L\left(3\tau_C^*\e^{\tau_C^*}\right)\right)}\right).
\end{equation}

The curve is plotted in Panel D of Fig.~\ref{fig.dylemat2} (the solid line). It is almost a straight line with the slope given by 

\begin{equation}\label{solution2}
1- \frac{\ln\frac{3}{5}}{\ln\left(\frac{1}{3\tau_C^*}W_L\left(3\tau_C^*\e^{\tau_C^*}\right)\right)} \to \frac{\ln 5}{\ln 3}\approx 1.465 \; \; as \tau_{C}^{*} \rightarrow \infty
\end{equation}

On the other hand, $\tau_D^*\to \frac12\log\frac53 \approx 0.255$ as $\tau_C^*\to 0$. If a point $(\tau_C,\tau_D)$ is above the curve $(\tau_C^*,\tau_D^*)$,
 then there exists an interior stationary state (which is unstable as suggested by numerical simulations). This means that the system is bistable and that it goes to all cooperation 
or all defection depending on the initial condition.

In the Panel C of Fig.~\ref{fig.dylemat2} we presented the interior stationary frequency for various $\tau_C's$ and $\tau_D's$. 

If $\tau_C=0$, then we have 

\begin{equation}\label{eqpd}
F(x) = \frac{1}{\tau_D} \ln\frac{4x+1}{3x} + 1 - 3x
\end{equation}
which is a decreasing function that has a zero inside the interval $[0,1]$ for $\tau_D>\frac{1}{2}\ln\frac{5}{3}$. A numerical solution of the equation 

\begin{equation}\label{numericpd}
\frac{1}{\tau_D} \ln\frac{4x+1}{3x} + 1 - 3x = 0
\end{equation}
is plotted in the Panel B of Fig.~\ref{fig.dylemat2}.

\begin{figure}
   \includegraphics[]{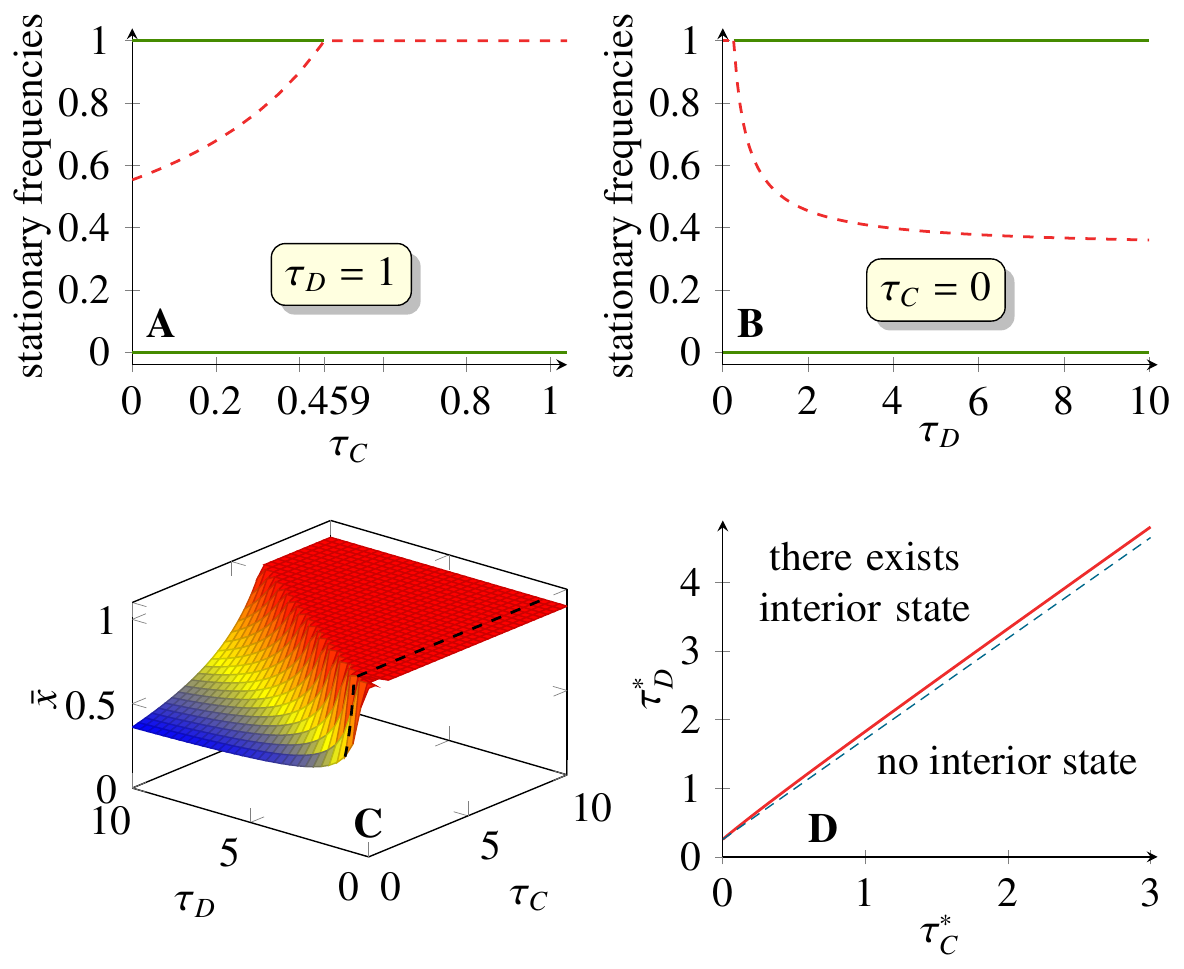}
   \caption{Numerical solutions of $F(x)=0$ for the matrix $U_4$ for the classical Prisoner's Dilemma game. Panel \textbf{A}: interior stationary state as a function of $\tau_C$, when $\tau_D=1$ is fixed. 
The interior stationary state exists for $\tau_C<0.459$.
   Panel \textbf{B}: the interior stationary state as a function of $\tau_D$, while $\tau_C=0$ is fixed. An interior stationary state exists for $\tau_D>0.2554$.
   Panel \textbf{C}: the interior stationary state as a function of $\tau_C$ and $\tau_D$.  The dotted line indicates the cross-section presented on Panel \textbf{A}.
   Panel \textbf{D}: the region of delays for which there exists the interior stationary state and the region where there is no interior stationary state. The solid curve denotes the border between these two regions. The curve is almost a straight line (the dashed line), plotted for a comparison, it has the same slope as the solid curve for $\tau_C^{*}\to+\infty$). 
      \label{fig.dylemat2}}
\end{figure}

We see that when one introduces asymmetric delays with a delay of cooperation smaller than a delay of defection, then there appears an interior unstable equilibrium. 
Hence the state with just cooperators becomes locally asymptotically stable.

\section{Discussion}

We studied effects of strategy-dependent time delays on stationary states of evolutionary games.

Recently, effects of the duration of interactions between two players on their payoffs, and therefore on evolutionary outcomes, were discussed by K\v{r}ivan and Cressman \cite{kricress}. 
In their models, the duration of interactions depend on the strategies involved. This can be interpreted as strategy-dependent time delays.
They showed that interaction times change stationary states of the system. 

Another approach is to consider ordinary differential equations with time delays. It was pointed out in \cite{alboszta} 
that in the so-called biological model, where it is assumed that the number of players born in a~given time is proportional to payoffs received by their parents at a~certain moment in the past, 
the interior state is asymptotically stable for any time delay. In such models, microscopic interactions lead to a new type of replicator dynamics which  
describe the time evolution of fractions of the population playing given strategies and the size of the population.

Here we studied biological-type models with strategy-dependent time delays. We observed a novel behavior. We showed that interior stationary states depend continuously on time delays. 

In particular, in games with two pure Nash equilibria (Stag-hunt-type games), the bigger the time delay of a given strategy is, the smaller its basin of attraction.

When we parametrize both delays by $\tau$, if the delay of defection is bigger than that of cooperation, then in the limit of infinite $\tau$, the interior unstable equilibrium tends to some value, both strategies have some basin of attraction. However, if the delay of cooperation is bigger than that of defection, then above some critical $\tau$, the unstable interior equilibrium ceases to exist and the defection becomes globally asymptotically stable. Under no circumstances cooperation becomes globally asymptotically stable.

In games with a stable interior equilibrium, the bigger the time delay of a strategy is, the less frequent is in the population.
Moreover, at certain time delays, the interior stationary state ceases to exist or there may appear another interior stationary state.

We see that the situation here is in a sense a reverse one to that in the Stag-hunt game. When we parametrize both delays by $\tau$, if the delay of defection is bigger than that of cooperation,
then above some critical $\tau$, the stable interior equilibrium ceases to exist and the cooperation becomes a globally asymptotically stable equilibrium.
However, if the delay of cooperation is bigger than that of defection, then in the limit of infinite $\tau$, the stable interior equilibrium tends to some value, both strategies cooexist. 
Under no circumstances defection becomes globally asymptotically stable.

In the Prisoner's Dilemma game, for time delays of cooperation smaller than the time delay of defection, there appears an unstable interior equilibrium 
and therefore for some initial conditions, the population converges to the homogeneous state with just cooperators. 
This shows an asymptotic stability of cooperation in a simple model of social dilemma.

It would be interesting to analyze strategy-dependent time delays in stochastic dynamics of finite populations. The work in this direction is in progress.

\appendix
\section*{Appendix}
\setcounter{equation}{0}
\setcounter{tw}{0}
\renewcommand{\theequation}{A.\arabic{equation}}
\renewcommand{\thetw}{A.\arabic{tw}}
\renewcommand{\thecor}{A.\arabic{cor}}
\renewcommand{\theprop}{A.\arabic{prop}}
\renewcommand{\thelem}{A.\arabic{lem}}
\renewcommand{\therem}{A.\arabic{rem}}

We provide here propositions, theorems, and their proofs which support results presented in the paper.   

The main results for Snowdrift-type games are included in Theorem~\ref{thm:brzeg} and following Corollary~\ref{cor:brzeg}. We present there a set of conditions that guarantee the existence of at least one interior stationary state. In Theorem~\ref{thm:monoton} we give a condition for the monotonicity of a function which zeros give us stationary states. Finally, in Proposition~\ref{prop:smalltauA} we state conditions for the absence of stationary states for small time delays. 

In Remark~\ref{rem:wl:brzeg} we give an explicit formula for the leading eigenvalue $\lambda$ for pure stationary states, $\bar x=0$ and $\bar x=1$. This eigenvalue determines the growth rate of the whole population. In Remark~\ref{rem:d0} we state some results about the existence of stationary states for $d=0$. 

We provide here also additional calculations concerning Stag-hunt type games.

\subsection{General results for Snowdrift-type games}\label{app1}

\medskip{}
Interior stationary states are given by zeros of the function $F(x)$ defined by~\eqref{funss}

%\begin{equation}\label{funss}
%   F(x) = \frac{1}{\tau_C-\tau_D}\ln\left(\frac{a x + b(1-x)}{c x + d(1-x)}\right) +1 - 
%   \frac{\Bigl(c x + d(1- x)\Bigr)^{\tau_C/(\tau_C-\tau_D)}}{\Bigl(a x + b(1-x)\Bigr)^{\tau_D/(\tau_C-\tau_D)}}.
%\end{equation}

Let $\bar U_C = a \bar x + b(1-\bar x)$ and $\quad \bar U_D = c \bar x + d (1-\bar x).$

\begin{rem}\label{rem:wl:brzeg}
   We see that for $\bar x= 0$ we have 
   \begin{equation}\label{lambdadla0}
      \lambda + 1 = d \e^{-\lambda\tau_D} \quad \Longrightarrow \quad \lambda = \frac{W_p(d \tau_D \e^{\tau_D})}{\tau_D},
   \end{equation}  
   while for $\bar x= 1$ we have 
   \begin{equation}\label{lambdadla1}
      \lambda + 1 = a \e^{-\lambda\tau_C} \quad \Longrightarrow \quad \lambda = \frac{W_p(a \tau_C \e^{\tau_C})}{\tau_C},
   \end{equation}  
   where $W_p$ is the Lambert $W$ function, that is a principle branch of the relation $W_p(x)\exp\bigl(W_p(x)\bigr)= x$. 
\end{rem}

We show that if $\tau_{C}\neq\tau_D$, then the value of an interior stationary state depends on time delays. 
Moreover, for some payoff matrices and some values of a delay, multiple interior stationary states exist. 
We would like to point out that these relations are not linear.
Thus, in contrary to the case without time delays or with equal delays (i.e. $\tau_C=\tau_{D}$), 
adding a constant to a column of a payoff matrix or multiplying the matrix by a constant, changes interior stationary states 
or even may change the number of interior stationary states. Thus, the intuition from non-delayed games about the asymptotic of dynamics 
of replicator equations are not necessarily valid for the case of non-equal delays.   

It turns out, that if the eigenvalue $\lambda(\bar x)$ corresponding to the stationary state $\bar x$ is positive, 
then the frequency of the given strategy is smaller than the frequency of this strategy in a non-delayed case if the delay corresponding to this strategy is larger than the time delay of the other strategy. 
We have the following proposition.
\begin{prop}\label{prop:eig}
   Let $a<c$ and $d<b$. Assume that $\bar x\in (0,1)$ and let $\lambda$ be the leading eigenvalue that corresponds to $\bar x$. Then if $\lambda>0$ we have 
   \begin{itemize}
      \item if $\tau_C>\tau_D$, then $\bar x<x^*$
      \item if $\tau_C<\tau_D$, then $\bar x>x^*$
   \end{itemize}
\end{prop}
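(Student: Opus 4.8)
The plan is to reduce the claim to an elementary sign analysis, using the explicit expression for the leading eigenvalue recorded just before Proposition~2. First I would introduce the affine function
\[
   h(x) = \bar U_C - \bar U_D = (a-c)x + (b-d)(1-x),
\]
and observe that, because $a<c$ and $d<b$, its slope $(a-c)-(b-d)$ is strictly negative, so $h$ is strictly decreasing on $[0,1]$. Since $h$ vanishes exactly at $x^*=\frac{b-d}{b-d+c-a}$ (which is precisely the non-delayed interior equilibrium, the point where the two payoffs coincide), I obtain the dichotomy: $h(\bar x)>0$ iff $\bar x<x^*$, and $h(\bar x)<0$ iff $\bar x>x^*$.

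Next I would invoke the formula $\lambda=\frac{1}{\tau_C-\tau_D}\ln\!\bigl(\bar U_C/\bar U_D\bigr)$ from Proposition~2. Because payoffs represent reproduction rates, $\bar U_C$ and $\bar U_D$ are positive at any interior $\bar x$, so the logarithm is well defined and $\ln(\bar U_C/\bar U_D)$ has the same sign as $\bar U_C-\bar U_D=h(\bar x)$. Hence the sign of $\lambda$ is the product of the sign of $h(\bar x)$ and the sign of $\tau_C-\tau_D$.

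The conclusion then follows by cases. If $\tau_C>\tau_D$, then $\lambda>0$ forces $h(\bar x)>0$, i.e. $\bar x<x^*$; if $\tau_C<\tau_D$, then $\lambda>0$ forces $h(\bar x)<0$, i.e. $\bar x>x^*$. This is exactly the assertion.

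There is no serious obstacle here: the argument is purely a bookkeeping of signs. The only points requiring a little care are the direction of monotonicity of $h$, which I read off from the hypotheses $a<c$ and $d<b$, and the positivity of $\bar U_C,\bar U_D$ at the interior state, which is needed so that $\ln(\bar U_C/\bar U_D)$ faithfully tracks the sign of $\bar U_C-\bar U_D$; both hold under the standing assumption that payoffs are positive.
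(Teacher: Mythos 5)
Your proof is correct and takes essentially the same route as the paper's: both rest on the eigenvalue formula $\lambda = \frac{1}{\tau_C-\tau_D}\ln\bigl(\bar U_C/\bar U_D\bigr)$ together with the observation that the sign of $\ln\bigl(\bar U_C/\bar U_D\bigr)$ is positive for $x<x^*$ and negative for $x>x^*$. The only cosmetic difference is how that sign dichotomy is obtained — the paper differentiates $\phi(x)=\ln\bigl(\bar U_C/\bar U_D\bigr)$ and uses $ad-bc<0$, while you note that the affine difference $\bar U_C-\bar U_D$ has negative slope $(a-c)-(b-d)$ and vanishes at $x^*$, which is a marginally more elementary way to reach the same conclusion.
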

\begin{proof}
   Assume that $\tau_C>\tau_D$. Then the sign of $\lambda$ is the same as the sign of 
   \[
   \phi(x)=\ln\left(\frac{a x + b(1-x)}{c x + d(1-x)}\right). 
   \]
   Note that $\phi(x^*)=0$ and 
   \[
   \phi'(x) = \frac{ad-bc}{\bigl(ax+b(1-x)\bigr)\bigl(cx+d(1-x)\bigr)} <0 
   \]
   because $ad<bc$. Thus, $\phi(x)>0$ for $x<x^*$, so $\bar x<x^*$. Similarly, if $\tau_C<\tau_D$, then the sign of $\lambda$ is opposite 
   to the sign of $\phi(x)$ and therefore $\bar x>x^*$.
\end{proof}

We will study general properties of $F$, they will help us to determine a number of stationary states of our replicator dynamics.
First, we determine conditions that would imply the sign of $F$ at $x=0$ and $x=1$.
Let us define 
\begin{equation}\label{adgw}
   c^*(a) = a^{-\tau_D/\tau_C} \left(\frac{W_p\bigl(a\tau_C\e^{\tau_C}\bigr)}{\tau_C}\right)^{1-\tau_D/\tau_C}, 
   \qquad 
   d^*(b) = b^{-\tau_D/\tau_C} \left(\frac{W_p\bigl(b\tau_C\e^{\tau_C}\bigr)}{\tau_C}\right)^{1-\tau_D/\tau_C},
\end{equation}
where $W_p$ is the Lambert $W$ function, that is a principle branch of the relation $W_p(x)\exp\bigl(W_p(x)\bigr)= x$. 
\begin{tw}\label{thm:brzeg}
   \textbf{\upshape (A)} If $a<c$, then $F(1)>0$ if and only if one of the following condition holds,
   \begin{enumerate}[\upshape (i)]
      \item $\tau_C>\tau_D$, $a<1$ and $c<c^*(a)$
      \item $\tau_C<\tau_D$, $a<1$ or $c>c^*(a)$
   \end{enumerate}
   \textbf{\upshape (B)}  If $b>d$, then $F(0)>0$ if and only if one of the following condition holds,
   \begin{enumerate}[\upshape (i)]
      \item $\tau_C>\tau_D$, $b<1$ or $d<b^*(b)$
      \item $\tau_C<\tau_D$, $b<1$ and $d>b^*(b)$
   \end{enumerate}
\end{tw}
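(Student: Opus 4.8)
The plan is to collapse each boundary value of $F$ into a single scalar comparison. Putting $x=1$ in \eqref{funss} (so that $ax+b(1-x)=a$ and $cx+d(1-x)=c$), I would write
\begin{equation*}
F(1)=g(\lambda_1), \qquad g(\lambda)=\lambda+1-a e^{-\lambda\tau_C}, \qquad \lambda_1=\frac{1}{\tau_C-\tau_D}\ln\frac{a}{c}.
\end{equation*}
This is an identity: $a e^{-\lambda_1\tau_C}=a(a/c)^{-\tau_C/(\tau_C-\tau_D)}=c^{\tau_C/(\tau_C-\tau_D)}/a^{\tau_D/(\tau_C-\tau_D)}$, which is exactly the subtracted term of $F(1)$. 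Since $g'(\lambda)=1+a\tau_C e^{-\lambda\tau_C}>0$, the function $g$ is a strictly increasing bijection of $\mathbb{R}$, so it has a unique real zero $\Lambda$, characterised by $\Lambda+1=a e^{-\Lambda\tau_C}$ --- precisely the leading eigenvalue of the pure state $\bar x=1$ from Remark~\ref{rem:wl:brzeg}, equation~\eqref{lambdadla1}. Hence
\begin{equation*}
F(1)>0 \quad\Longleftrightarrow\quad \lambda_1>\Lambda,
\end{equation*}
and statement (A) reduces entirely to controlling the sign of $\lambda_1-\Lambda$.

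I would then record three facts. First, for fixed $a$ the quantity $\lambda_1$ is monotone in $c$, decreasing when $\tau_C>\tau_D$ and increasing when $\tau_C<\tau_D$. Second, at the endpoint $c=a$ one has $\lambda_1=0$, so $F(1)=g(0)=1-a$; this is where the dichotomy $a<1$ versus $a\ge 1$ originates. Third, solving $\lambda_1=\Lambda$ for $c$ --- equivalently solving $g(\Lambda)=0$ in closed form with the Lambert $W$ function --- produces the single threshold $c=c^*(a)$ of \eqref{adgw}. Combining them:
\begin{itemize}
   \item if $\tau_C>\tau_D$, then $F(1)$ decreases in $c$ starting from $1-a$ at $c=a$, so $F(1)>0$ forces both $a<1$ and $c<c^*(a)$, which is (i);
   \item if $\tau_C<\tau_D$, then $F(1)$ increases in $c$ starting from $1-a$, so $F(1)>0$ holds already when $a<1$ and otherwise (when $a\ge 1$) exactly when $c>c^*(a)$, which is (ii).
\end{itemize}

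Part (B) is the same argument run at $x=0$, where $ax+b(1-x)=b$ and $cx+d(1-x)=d$. The identical substitution gives $F(0)=g_0(\lambda_0)$ with $g_0(\lambda)=\lambda+1-b e^{-\lambda\tau_C}$ and $\lambda_0=\frac{1}{\tau_C-\tau_D}\ln(b/d)$; the zero of $g_0$ now yields the threshold $d^*(b)$ of \eqref{adgw}, and the endpoint value at $d=b$ is $1-b$. Here the hypothesis $b>d$ plays the role that $a<c$ played above, so the monotonicity of $\lambda_0$ in $d$ (decreasing for $\tau_C>\tau_D$, increasing for $\tau_C<\tau_D$) together with the sign of $1-b$ reproduces (i) and (ii) verbatim.

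The substitution and the two resulting sign tables are routine. The one genuinely delicate step --- and the place I expect trouble --- is the closed-form solution of $g(\Lambda)=0$ (resp. $g_0=0$) and the verification that the resulting $c$-threshold (resp. $d$-threshold) coincides exactly with $c^*(a)$ (resp. $d^*(b)$) as written in \eqref{adgw}: this is where the Lambert $W$ manipulations and the exponents $\tau_D/\tau_C$ and $1-\tau_D/\tau_C$ must be tracked with care. Everything else is a consequence of the monotonicity of $g$, $g_0$ and of $\lambda_1$, $\lambda_0$.
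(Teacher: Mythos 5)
Your proof is, up to the change of variables $u=\e^{\lambda\tau_C}$, the same as the paper's: the paper sets $z=a/c$, $u=z^{\tau_C/(\tau_C-\tau_D)}$ and studies $f_a(u)=\frac{1}{\tau_C}\ln u+1-\frac{a}{u}$, which is exactly your $g$ evaluated at $\lambda=\frac{1}{\tau_C}\ln u$; both arguments then rest on the strict monotonicity of this single function, its value $1-a$ at the endpoint $c=a$ (i.e.\ $u=1$, $\lambda=0$), and a Lambert-$W$ closed form for its unique zero. Your eigenvalue reading of that zero and the reduction of (A) to the comparison $\lambda_1\gtrless\Lambda$ is a tidier packaging, and your two sign tables, as well as the transcription to part (B), are correct.

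The step you flagged as the likely trouble spot is indeed where trouble occurs, but it is the paper's, not yours. Solving $g(\Lambda)=0$: multiplying $\Lambda+1=a\e^{-\Lambda\tau_C}$ by $\tau_C\e^{(\Lambda+1)\tau_C}$ gives $(\Lambda+1)\tau_C\,\e^{(\Lambda+1)\tau_C}=a\tau_C\e^{\tau_C}$, hence
\[
\Lambda=\frac{W_p\bigl(a\tau_C\e^{\tau_C}\bigr)}{\tau_C}-1,
\]
note the $-1$, which is missing from the printed formula \eqref{lambdadla1} of Remark~\ref{rem:wl:brzeg}; you are safe only because you cited the defining equation rather than that closed form. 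Solving $\lambda_1=\Lambda$ for $c$ then yields the threshold
\[
c^\star=a\,\e^{-\Lambda(\tau_C-\tau_D)}
=a^{\tau_D/\tau_C}\left(\frac{W_p\bigl(a\tau_C\e^{\tau_C}\bigr)}{\tau_C}\right)^{1-\tau_D/\tau_C},
\]
which agrees with what the paper's own proof produces (there $u^*=a\tau_C/W_p(a\tau_C\e^{\tau_C})$ and $c^\star=a(u^*)^{\tau_D/\tau_C-1}$), but \emph{not} with the displayed definition \eqref{adgw}, whose prefactor is $a^{-\tau_D/\tau_C}$; the two differ by the factor $a^{2\tau_D/\tau_C}$. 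A numerical check (e.g.\ $a=1/2$, $\tau_C=1$, $\tau_D=1/2$, where $F(1)=0$ at $c\approx 0.585$, not at $c\approx 1.17$) confirms that \eqref{adgw} carries a sign typo in the exponent of $a$, and the same correction applies to $d^*(b)$, with $b$ in place of $a$. So your verification against \eqref{adgw} as literally printed would fail; against the corrected formula it goes through, and with that emendation your proof is complete.
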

\begin{proof}   
   
   First we study the sign of $F(1)$.  It is easy to see that 
   \[
   F(1) = \frac{1}{\tau_C-\tau_D}\ln\left(\frac{a}{c}\right) +1 -    a\biggl(\frac{c}{a}\biggr)^{\tau_C/(\tau_C-\tau_D)}.
   \]
   Assume that $\tau_C>\tau_D$. Due to the assumption $a<c$, there exists $z\in (0,1)$ such that 
   $a = z c$. We plug this into the expression for $F(1)$ and we obtain
   \begin{equation}\label{formulazz}
      \frac{1}{\tau_C-\tau_D}\ln z +1 -  a\biggl(\frac{1}{z}\biggr)^{\tau_C/(\tau_C-\tau_D)}.
   \end{equation}
   Let us introduce $\displaystyle u = z^{\tau_C/(\tau_C-\tau_D)}$. Then the above expression simplifies to
   \[
   f_a(u) = \frac{1}{\tau_C}\ln u +1 -\frac{a}{u}.
   \]
   It is easy to see that $f_a$ is a continuous, strictly increasing function of $u$ and $\displaystyle \lim_{u\to0^+}f_a(u) = -\infty$. Thus we have
   two possibilities. Either $f_a(1)< 0$ and $f_a(u)<0$ for all $u\in (0,1)$ (thus $F(1)<0$ for all $a<c$) or there exits a unique 
   $u^*\in (0,1)$ such that $f_a(u^*)=0$ ($f_a(u)<0$ for $0<u<u^*$ and $f_a(u)>0$ for $u^*<u<1$) due to monotonicity (and this implies an appropriate sign of $F(1)$ depending on the value 
   of $a$ with respect to~$c$). 
   
   Note that $f_a(1)\le 0$ is equivalent to $a\ge 1$. Assume now that $a<1$ and we find $u^*$. We have 
   %   \begin{align*}
   \[
   \frac{1}{\tau_C}\ln u^* +1 = \frac{a}{u^*} \; \;\Longleftrightarrow\;\;
   \tau_C-a\tau_C\frac{1}{u^*} = \ln \frac{1}{u^*}  \; \;\Longleftrightarrow\;\;
   a\tau_C \frac{1}{u^*} \e^{a\tau_C\frac{1}{u^*}} = a\tau_C e^{\tau_C}. 
   \]
   %  \end{align*}
   Thus,
   \[
   u^* = \frac{a\tau_C}{W_p\bigl(a\tau_C\e^{\tau_C}\bigr)},
   \]
   where $W_p$ is the Lambert function. Because $\frac{a}{c} = u^{(\tau_C-\tau_D)/\tau_C}$, then $F(1)\le 0$ if $a\ge 1$ or $c> c^*(a)$ 
   where $c^*(a)$ is given by~\eqref{adgw}.  Thus, if $a>1$ and $c< c^*(a)$, then $F(1)>0$ and the point (i) is proved. Finally, note that 
   if $\tau_C<\tau_D$, then the variable $u$ changes from $1$ to $+\infty$ instead of from $0$ to $1$ and very similar arguments lead to 
   the assertion of the point (ii). 
   
   Let us calculate the value
   \[
   F(0) = \frac{1}{\tau_C-\tau_D}\ln\left(\frac{d}{b}\right) +1 -    b\biggl(\frac{b}{d}\biggr)^{\tau_C/(\tau_C-\tau_D)}.
   \]
   Note that this situation is analogous to the previous one and analogues arguments prove this part of the theorem.
\end{proof}

From Theorem~\ref{thm:brzeg} and its proof the following corollary can be easily deduced. 
\begin{cor}\label{cor:brzeg}
   \textbf{\upshape (A)} If $a<c$, then $F(1)<0$ if and only if one of the following condition holds:
   \begin{enumerate}[\upshape (i)]
      \item $\tau_C>\tau_D$, $a\ge 1$ or $c>c^*(a)$
      \item $\tau_C<\tau_D$, $a\ge 1$ and $c<c^*(a)$
   \end{enumerate}
   \textbf{\upshape  (B)}  If $b>d$, then $F(0)>0$ if and only if one of the following condition holds,
   \begin{enumerate}[\upshape (i)]
      \item $\tau_C>\tau_D$, $b\ge 1$ and $d>b^*(b)$
      \item $\tau_C<\tau_D$, $b\ge 1$ or $d<b^*(b)$
   \end{enumerate}
\end{cor}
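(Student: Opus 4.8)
The plan is to obtain Corollary~\ref{cor:brzeg} purely as the logical complement of Theorem~\ref{thm:brzeg}, using one extra fact extracted from the latter's proof: the boundary set $\{F(1)=0\}$ reduces to a single threshold, so that the non-strict inequalities coming from a naive complementation can be sharpened to the strict ones asserted here. Concretely, for part (A) I would start from the reduction $F(1)=f_a(u)$ with $f_a(u)=\frac{1}{\tau_C}\ln u+1-\frac{a}{u}$ established in the proof of Theorem~\ref{thm:brzeg}, where $u=z^{\tau_C/(\tau_C-\tau_D)}$ and $a=zc$ with $z\in(0,1)$. Since $f_a$ is strictly increasing, $F(1)$ vanishes at most at one value of $u$, hence (translating back through $c=a/z$) on at most the single curve $c=c^*(a)$, the degenerate cases $a=1$, $z=1$ being excluded by $a<c$. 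This strict monotonicity is the crucial ingredient: it guarantees that $\{F(1)>0\}$, $\{F(1)=0\}$, and $\{F(1)<0\}$ are separated by this one threshold, with no interval of equality.

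Granting this, the proof becomes bookkeeping with a three-way sign partition. Because $F(1)<0\iff\neg\bigl(F(1)>0\bigr)\wedge\neg\bigl(F(1)=0\bigr)$, I would negate each bicondition of Theorem~\ref{thm:brzeg}(A) and remove the threshold locus. For $\tau_C>\tau_D$, Theorem~\ref{thm:brzeg}(A)(i) gives $F(1)>0\iff(a<1\text{ and }c<c^*(a))$ while the proof gives $F(1)=0\iff(a<1\text{ and }c=c^*(a))$; negating their union yields $a\ge1$ or $c>c^*(a)$, which is exactly (i). For $\tau_C<\tau_D$ the substitution sends $z\in(0,1)$ to $u\in(1,+\infty)$ and reverses the orientation of the $c$–inequality, so that $F(1)>0\iff(a<1\text{ or }c>c^*(a))$ and $F(1)=0\iff(a\ge1\text{ and }c=c^*(a))$, whose negated union is $a\ge1$ and $c<c^*(a)$, i.e. (ii). Part (B) is handled identically after the substitution $a\mapsto b$, $c\mapsto d$, $c^*(a)\mapsto b^*(b)$, exactly as the proof of Theorem~\ref{thm:brzeg} reduces $F(0)$ to the same normal form; here the complementary region is $\{F(0)<0\}$.

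The main obstacle I anticipate is not analytic but combinatorial: correctly tracking the orientation of the $c$–inequality when passing between the two regimes $\tau_C>\tau_D$ and $\tau_C<\tau_D$, since the sign of the exponent $\tau_C/(\tau_C-\tau_D)$ flips and the range of $u$ changes from $(0,1)$ to $(1,+\infty)$, so that ``$u$ large'' corresponds to ``$c$ small'' in one case and to ``$c$ large'' in the other. The only genuinely delicate point is the justification that the boundary $F(1)=0$ reduces to the single equation $c=c^*(a)$ and contributes no thickness; this is what converts the weak inequalities of a naive complement into the strict ones, and it is precisely where the strict monotonicity of $f_a$ from the proof of Theorem~\ref{thm:brzeg} is indispensable.
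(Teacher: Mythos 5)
Your proposal is correct and is essentially the paper's own route: the paper gives no separate argument, stating only that the corollary can be deduced from Theorem~\ref{thm:brzeg} and its proof, and your complementation---negating each equivalence of the theorem and deleting the zero locus $c=c^*(a)$ (resp.\ $d=b^*(b)$), which the strict monotonicity of $f_a$ confines to a single threshold curve---is precisely that deduction, carried out correctly in both $\tau$-regimes. You were also right to treat part (B) as characterizing $F(0)<0$: the ``$F(0)>0$'' in the corollary's statement is a typo, as confirmed by the unnamed corollary following it in the paper, which invokes this part for $F(0)<0$.
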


\begin{cor}
   Theorem~\ref{thm:brzeg} and Corollary~\ref{cor:brzeg} give conditions that guarantee the existence of at least one interior stationary state. Namely, if $a<c$ and $b>d$, then we need $F(1)>0$ (Theorem~\ref{thm:brzeg}\textbf{(A)}) and $F(0)<0$ (Corollary~\ref{cor:brzeg}\textbf{(B)}) or, reversely, $F(1)<0$ (Corollary~\ref{cor:brzeg}\textbf{(A)}) and $F(0)>0$ (Theorem~\ref{thm:brzeg}\textbf{(B)}). 
\end{cor}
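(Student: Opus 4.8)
The plan is to reduce the claim to the Intermediate Value Theorem applied to the function $F$ on the closed interval $[0,1]$. By Proposition~2, every zero $\bar x\in(0,1)$ of the function $F$ defined in~\eqref{funss} corresponds to an interior stationary state of the replicator dynamics~(\eqref{bxeqb},~\eqref{bcont2}). Hence it suffices to exhibit at least one such zero, and the whole argument amounts to showing that the hypotheses of Theorem~\ref{thm:brzeg} and Corollary~\ref{cor:brzeg} force $F$ to change sign across $[0,1]$.

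First I would verify that $F$ extends to a continuous function up to the endpoints. Under the Snowdrift hypotheses $a<c$ and $b>d$, together with positivity of the payoffs, the average payoffs $ax+b(1-x)$ and $cx+d(1-x)$ remain strictly positive on all of $[0,1]$, so the logarithm and the fractional powers appearing in~\eqref{funss} are well defined and continuous there. In particular the boundary value
\[
F(0)=\frac{1}{\tau_C-\tau_D}\ln\left(\frac{d}{b}\right)+1-b\left(\frac{b}{d}\right)^{\tau_C/(\tau_C-\tau_D)}
\]
and the analogous expression for $F(1)$ used in the proof of Theorem~\ref{thm:brzeg} are genuine limits of $F$, so $F$ is continuous on $[0,1]$.

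Next I would observe that Theorem~\ref{thm:brzeg} and Corollary~\ref{cor:brzeg} together constitute a complete sign analysis of $F(0)$ and $F(1)$ in terms of $a,c,b,d$ and the delays. The two cases displayed in the statement, namely $F(1)>0$ with $F(0)<0$ and, reversely, $F(1)<0$ with $F(0)>0$, are exactly the parameter combinations for which $F$ takes values of opposite sign at the two endpoints. The Intermediate Value Theorem then yields at least one point $\bar x\in(0,1)$ with $F(\bar x)=0$, which by Proposition~2 is the desired interior stationary state.

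The main obstacle is essentially that there is no remaining deep analytic difficulty: all the substantive work was done in Theorem~\ref{thm:brzeg} and Corollary~\ref{cor:brzeg}, and this corollary is their direct packaging. The points demanding care are the continuous extension of $F$ to $[0,1]$ (equivalently, that the endpoint limits exist and equal the expressions used above, which relies on positivity of payoffs) and the translation, via Proposition~2, of a zero of $F$ into an actual stationary state. I would also flag that the Intermediate Value Theorem controls only existence, not the sign of the associated leading eigenvalue $\bar\lambda$; if one additionally wants a biologically relevant growing equilibrium, this existence argument must be combined with the eigenvalue formula of Proposition~2.
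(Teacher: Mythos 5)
Your proposal is correct and matches the paper's (implicit) reasoning: the corollary is just the Intermediate Value Theorem applied to the continuous function $F$ on $[0,1]$, using the opposite endpoint signs supplied by Theorem~A.3 and Corollary~A.4, with Proposition~2 translating the resulting zero into a stationary state. The paper states this corollary without proof precisely because this is the intended one-line argument; your added caveats (positivity of payoffs for continuity at the endpoints, and that existence says nothing about the sign of $\bar\lambda$) are sensible and consistent with how the paper treats those issues elsewhere.
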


If the function $F$ is monotonic, then the previous theorem gives us a condition guaranteeing the existence of a solution of $F(x)=0$ on $(0,1)$, 
that is the existence of an interior stationary state.

\begin{tw}\label{thm:monoton}
   If $a<c$, $d<b$, and additionally $a<d<c$ or $d<a<b$, then the function $F$ is monotonic. Moreover, if additionally $\tau_C>\tau_D$, 
   then it is decreasing and if $\tau_C<\tau_D$, then it is increasing. 
\end{tw}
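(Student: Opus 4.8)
The plan is to prove monotonicity directly: compute $F'(x)$, write it as a sum of two contributions, and show that both carry the \emph{same} sign on $(0,1)$, with that sign determined entirely by whether $\tau_C>\tau_D$ or $\tau_C<\tau_D$. Throughout I would write $U_C(x)=ax+b(1-x)$ and $U_D(x)=cx+d(1-x)$ for the linear payoff functions, so $U_C'=a-b$, $U_D'=c-d$, and abbreviate the two exponents in \eqref{funss} as $m=\tau_C/(\tau_C-\tau_D)$ and $n=\tau_D/(\tau_C-\tau_D)$, which satisfy $m-n=1$. The last term of $F$ is then $G(x)=U_D^{m}U_C^{-n}>0$, and logarithmic differentiation gives $G'/G=m(c-d)/U_D-n(a-b)/U_C$. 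Differentiating \eqref{funss} yields
\[
F'(x)=\frac{1}{\tau_C-\tau_D}\left(\frac{a-b}{U_C}-\frac{c-d}{U_D}\right)-G(x)\left(\frac{m\,(c-d)}{U_D}-\frac{n\,(a-b)}{U_C}\right).
\]

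First I would settle the sign of the first contribution. As already computed in the proof of Proposition~\ref{prop:eig}, the numerator of $(a-b)/U_C-(c-d)/U_D$ collapses to the constant $(a-b)U_D-(c-d)U_C=ad-bc$, so this contribution equals $(ad-bc)/\bigl[(\tau_C-\tau_D)U_CU_D\bigr]$. Since $a<c$, $d<b$ (with positive payoffs) force $ad<bc$, this term is negative when $\tau_C>\tau_D$ and positive when $\tau_C<\tau_D$. Note this uses only the base hypotheses, not the extra ordering conditions.

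The extra hypotheses enter only in controlling $G'$. The key observation is that both ``$a<d<c$ together with $d<b$'' and ``$d<a<b$ together with $a<c$'' imply the \emph{same} two inequalities, namely $a<b$ (so $a-b<0$) and $c>d$ (so $c-d>0$). I would then track the signs of $m$ and $n$: when $\tau_C>\tau_D$ both are positive, and when $\tau_C<\tau_D$ both are negative. Substituting $c-d>0$ and $a-b<0$ into the bracket $m(c-d)/U_D-n(a-b)/U_C$ makes it positive in the first regime and negative in the second; multiplying by $-G(x)<0$ shows the $G$ contribution to $F'$ is negative when $\tau_C>\tau_D$ and positive when $\tau_C<\tau_D$ — precisely matching the sign of the first contribution.

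Adding the two pieces then gives $F'(x)<0$ on all of $(0,1)$ when $\tau_C>\tau_D$ and $F'(x)>0$ when $\tau_C<\tau_D$, which is exactly the claimed monotonicity and its direction. I expect the main obstacle to be bookkeeping rather than any deep difficulty: one must carefully track how the signs of $m=\tau_C/(\tau_C-\tau_D)$ and $n=\tau_D/(\tau_C-\tau_D)$ flip as $\tau_C-\tau_D$ changes sign, and verify that the order conditions are exactly what force $a-b$ and $c-d$ into the configuration that aligns the bracket with the first term (so no cancellation can occur). Positivity of $U_C,U_D$ on $(0,1)$, needed so that the logarithm and the powers in $G$ are well defined, follows since each is a convex combination of positive payoffs.
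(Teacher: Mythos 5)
Your proof is correct and takes essentially the same route as the paper's: both differentiate $F$, use the collapse $(a-b)U_D-(c-d)U_C=ad-bc$ for the logarithmic term, and conclude from the three sign conditions $ad-bc<0$, $c-d>0$, $a-b<0$ that the extra ordering hypotheses enforce. The only difference is bookkeeping: the paper factors $\frac{1}{\tau_C-\tau_D}$ out of both contributions so that the remaining bracket is negative regardless of the delay ordering, whereas you keep the exponents $m,n$ inside and track how their signs flip with $\tau_C-\tau_D$ — the underlying computation is identical.
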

\begin{proof}
   It is enough to calculate the first derivative of $F$. It reads
   \[
   \begin{split}
      F'(x) &= \frac{1}{\tau_C-\tau_D}\Biggl( \frac{ad-bc}{\bigl(ax+b(1-x)\bigr)\bigl(cx+d(1-x)\bigr)}   +\\
      &\qquad +\biggl(\tau_C(d-c)+\tau_D(a-b)
      \frac{cx+d(1-x)}{ax+b(1-x)}\biggr)\biggl(\frac{cx+d(1-x)}{ax+b(1-x)}\biggr)^{\tau_D/(\tau_C-\tau_D)}\Biggr)
   \end{split}
   \]
   The assumptions guarantee that $ad-bc<0$, $d-c<0$, $a-b<0$, and the thesis follows.
\end{proof}

\begin{cor}
   If $a=b$ or $c=d$, then the function $F$ is monotonic and there is at most one solution of $F(x)=0$ in the interval $[0,1]$.
\end{cor}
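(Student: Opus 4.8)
The plan is to reuse the explicit formula for $F'$ computed in the proof of Theorem~\ref{thm:monoton}, which is valid for all payoff values, and to notice that each of the hypotheses $a=b$ and $c=d$ collapses one of the two affine payoff combinations to a constant. This annihilates one of the two competing terms in $F'$ and leaves an expression of constant sign.

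First I would take the case $a=b$. Then $ax+b(1-x)\equiv a$, the coefficient $\tau_D(a-b)$ in the bracket of $F'$ vanishes, and $ad-bc=a(d-c)$. Factoring $(d-c)$ out of the derivative formula gives
\[
F'(x)=\frac{d-c}{\tau_C-\tau_D}\left(\frac{1}{cx+d(1-x)}+\tau_C\left(\frac{cx+d(1-x)}{a}\right)^{\tau_D/(\tau_C-\tau_D)}\right).
\]
Because the payoffs are positive we have $cx+d(1-x)>0$ and $a>0$ on $[0,1]$, so the bracket is a sum of two strictly positive terms and is therefore strictly positive. Hence the sign of $F'(x)$ is that of the constant $(d-c)/(\tau_C-\tau_D)$ for every $x$, and $F$ is strictly monotonic whenever $c\neq d$.

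The case $c=d$ is symmetric: now $cx+d(1-x)\equiv c$, the term $\tau_C(d-c)$ vanishes, $ad-bc=c(a-b)$, and factoring $(a-b)$ out produces an analogous expression whose bracket is again a sum of strictly positive terms and so is strictly positive. Thus the sign of $F'$ equals that of the constant $(a-b)/(\tau_C-\tau_D)$, and $F$ is strictly monotonic whenever $a\neq b$. In both cases strict monotonicity of $F$ on $[0,1]$ gives at most one zero, which is the assertion.

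The only delicate point, and hence the main obstacle, is the doubly degenerate situation in which $a=b$ and $c=d$ hold simultaneously: then both affine functions are constant, $F$ reduces to a constant and $F'\equiv0$, so the reasoning above yields only weak monotonicity. Here $F$ has no zero in $[0,1]$ at all — hence trivially at most one — except on the measure-zero set of parameters for which that constant value happens to vanish. I would record this as a nongeneric exception, so that the stated conclusion holds for all other parameter values covered by the hypotheses.
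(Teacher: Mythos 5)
Your proof is correct and follows essentially the same route as the paper: the corollary sits immediately after Theorem~\ref{thm:monoton}, and the intended argument is exactly yours --- specialize the formula for $F'$ from that theorem's proof, note that $a=b$ (resp.\ $c=d$) annihilates the $\tau_D(a-b)$ (resp.\ $\tau_C(d-c)$) term, and pull the common factor $d-c$ (resp.\ $a-b$) out of a manifestly positive bracket, so that $F'$ has constant sign. Your closing remark about the doubly degenerate case $a=b$ and $c=d$ (where $F$ is constant, and can even vanish identically, e.g.\ for $a=b=c=d=1$) is a genuine edge case that the paper passes over silently; it is a worthwhile caveat but does not alter the substance of the argument.
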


\begin{rem}
   Theorems~\ref{thm:brzeg} and~\ref{thm:monoton} give a complete description of the existence of the unique stationary 
   state $\bar x$ inside the interval $(0,1)$ if $a<d<c$ or $d<a<b$ (under the assumption that $a<c$, $d<b$). 
\end{rem}

Now, we give a condition that guarantees the existence of the interior stationary point $\bar x$ when $\tau_{D}$ is fixed and $\tau_{C}$ converges to zero.
\begin{prop} \label{prop:smalltauA}
   Assume that $0<b<a<c$ and $d=0$. 
   If one of the following conditions holds,
   \begin{enumerate}[\bf (a)]
      \item $\displaystyle \tau_{D}\le \frac{b}{a(a-b)}$, $a>1$, and $\displaystyle \tau_{D}> \frac{\ln\left(\frac{c}{a}\right)}{a-1},$
      \item $\displaystyle \tau_{D}>\frac{b}{a(a-b)}$ and $b>1,$
   \end{enumerate}
   then for $\tau_C<\tau_{D}$ close enough to $0$, there exists no interior equilibrium state $\bar x\in (0,1)$. 
   %On the other hand if neither \textbf{(a)} nor \textbf{(b)} holds then there exists at least one interior equilibrium state $\bar x\in (0,1)$ for $\tau_C<\tau_D$ close enough to~0 and for.
\end{prop}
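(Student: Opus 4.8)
The plan is to pass to the limit $\tau_C\to 0^+$ and reduce the non-existence of interior zeros of $F$ to a sign condition on an explicit limiting function. With $d=0$ write $\bar U_C = b+(a-b)x$ and $\bar U_D = cx$. Since $\frac{1}{\tau_C-\tau_D}\to-\frac{1}{\tau_D}$, $\frac{\tau_C}{\tau_C-\tau_D}\to 0$ and $\frac{\tau_D}{\tau_C-\tau_D}\to-1$ as $\tau_C\to 0^+$, the function $F$ from \eqref{funss} converges pointwise on $(0,1]$ to
\[
   F_0(x)=\frac{1}{\tau_D}\ln\frac{cx}{b+(a-b)x}+1-\bigl(b+(a-b)x\bigr).
\]
Because $F_0(x)\to-\infty$ as $x\to 0^+$, the goal becomes to show $F_0<0$ on all of $(0,1]$; a compactness/continuity argument then upgrades this to $F<0$ on $(0,1)$ for $\tau_C$ small, which is exactly the absence of an interior stationary state.

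First I would separate the behaviour near $x=0$ from the bulk. Let $x^*=\frac{b}{c-a+b}$ be the non-delayed interior equilibrium; note $cx=\bar U_C$ precisely at $x=x^*$. Writing $w=\bar U_D/\bar U_C=cx/(b+(a-b)x)$ and $r=\tau_C/(\tau_C-\tau_D)<0$, a short computation gives
\[
   F(x)-F_0(x)=-\frac{r\ln w}{\tau_D}-\bar U_C\bigl(w^{r}-1\bigr).
\]
For $x<x^*$ one has $w<1$, hence $r\ln w>0$ and both terms on the right are negative, so $F(x)<F_0(x)$ for every $\tau_C\in(0,\tau_D)$. Thus on $(0,x^*]$ it suffices to control $F_0$ itself, while on the compact interval $[x^*,1]$, bounded away from $0$, the map $\tau_C\mapsto F$ extends continuously by $F_0$ at $\tau_C=0$, giving uniform convergence $F\to F_0$ there; hence $\max_{[x^*,1]}F_0<0$ will deliver $F<0$ for small $\tau_C$. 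Everything therefore reduces to the single inequality $\max_{(0,1]}F_0<0$.

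To analyse $F_0$ I would use $F_0'(x)=\frac{b}{\tau_D\,x(b+(a-b)x)}-(a-b)$, which is strictly decreasing and changes sign at most once since $x\mapsto x(b+(a-b)x)$ increases from $0$ to $a$ on $(0,1]$. In case (a) the hypothesis $\tau_D\le\frac{b}{a(a-b)}$ forces $F_0'>0$ throughout, so $F_0$ is increasing and $\max_{(0,1]}F_0=F_0(1)=\frac{1}{\tau_D}\ln\frac{c}{a}+1-a$; the assumptions $a>1$ and $\tau_D>\frac{\ln(c/a)}{a-1}$ are exactly equivalent to $F_0(1)<0$, finishing this case. In case (b), $\tau_D>\frac{b}{a(a-b)}$ places the unique critical point $x_0$ in the interior, and here $b>1$ already settles $(0,x^*]$: there $\ln\frac{cx}{\bar U_C}\le 0$ and $1-\bar U_C<1-b<0$, so $F_0<0$ on $(0,x^*]$.

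The main obstacle is the remaining interval $[x^*,1]$ in case (b), i.e. proving $\frac{1}{\tau_D}\ln\frac{cx}{b+(a-b)x}<\bigl(b+(a-b)x\bigr)-1$ on $[x^*,1]$, the region where the logarithm is positive and so fights against the term $1-\bar U_C$. If the maximiser $x_0$ does not lie in $(x^*,1)$ then $F_0$ is decreasing on $[x^*,1]$ and $\max_{[x^*,1]}F_0=F_0(x^*)=1-\bar U_C(x^*)<0$ by $b>1$, and we are done. The delicate case is $x_0\in(x^*,1)$: here I would use the first-order condition $\frac{1}{\tau_D}=(a-b)x_0\bar U_C(x_0)/b$ to write $F_0(x_0)=\frac{(a-b)x_0\bar U_C(x_0)}{b}\ln\frac{cx_0}{\bar U_C(x_0)}+1-\bar U_C(x_0)$ and then split on the sign of $cx_0-\bar U_C(x_0)$: when $cx_0\le\bar U_C(x_0)$ the logarithmic term is non-positive and negativity is immediate, whereas the case $cx_0>\bar U_C(x_0)$ is the genuinely hard one, where a quantitative estimate relating $c$, $\tau_D$ and $x_0$ must be carried out carefully (the hypotheses of (b) bound the factor $1-\bar U_C$ and the coefficient $1/\tau_D$, but the size of $c$ enters the logarithm directly). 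This last estimate is the crux of the whole proposition.
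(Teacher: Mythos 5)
Your reduction to the limit function $F_0$, the comparison $F(x)<F_0(x)$ on $(0,x^*)$ (which, nicely, holds for \emph{every} $\tau_C\in(0,\tau_D)$, not only small ones, and repairs the non-uniformity of the convergence near $x=0$ that the paper's bare continuity appeal glosses over), and your treatment of case (a) are correct and essentially identical to the paper's argument: the paper likewise sets $\tau_C=0$, computes $F_0'(x)=\frac{b}{\tau_D\,x((a-b)x+b)}-(a-b)$, notes concavity, and in case (a) reduces everything to $F_0(1)<0$, which is equivalent to $a>1$ and $\tau_D>\ln(c/a)/(a-1)$. The genuine gap is exactly where you say it is: in case (b), when the maximiser $x_0=x_{\max}$ lies in $(x^*,1)$, you do not prove $F_0(x_{\max})<0$. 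At that point the paper proceeds differently: it substitutes $z=4/(b\tau_D)$, writes $F_0(x_{\max})=h(z)$, computes $h'(z)=\frac{b}{4}\ln\bigl(\tilde c z/(1+\sqrt{1+z})^2\bigr)$ with $\tilde c=c/(a-b)$, so that $h$ has at most one interior critical point (a minimum), and hence is bounded by its endpoint limits: $1-b<0$ as $z\to0$ (i.e. $\tau_D\to\infty$), and, as $z$ tends to its upper limit (i.e. $\tau_D\downarrow b/(a(a-b))$, where $x_{\max}\to1$), the value $\frac{a(a-b)}{b}\ln(c/a)+1-a$, which the paper dismisses with ``we arrive at the case considered earlier.''

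You should know, however, that the estimate you call the crux cannot be carried out: under hypothesis (b) alone the inequality $F_0(x_{\max})<0$ is false, so the hole is in the statement itself, not merely in your write-up; your instinct that the size of $c$ enters the logarithm uncontrolled is exactly the problem. Take $a=2$, $b=3/2$, $c=100$, $d=0$, $\tau_D=1.6$: hypothesis (b) holds since $b/(a(a-b))=1.5<\tau_D$ and $b>1$, yet $F_0(1/2)=\frac{1}{1.6}\ln(50/1.75)+1-1.75\approx 1.35>0$ while $F_0(x)\to-\infty$ as $x\to0^+$; thus $F_0$ has a zero in $(0,1)$, and combining your comparison inequality (at a point below $x^*$) with pointwise convergence at $x=1/2$, the function $F$ also has a zero in $(0,1)$ for every sufficiently small $\tau_C>0$, i.e. an interior equilibrium exists. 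The paper's proof breaks at the same spot: in this example its second endpoint limit is $\frac{a(a-b)}{b}\ln(c/a)+1-a\approx 1.61>0$, and invoking ``the case considered earlier'' there tacitly requires the case-(a) condition $\frac{b}{a(a-b)}\ge \ln(c/a)/(a-1)$, which does not follow from (b). If that inequality is added to hypothesis (b), both endpoint limits become nonpositive and either the paper's $h(z)$ argument or your first-order-condition computation closes the proof; without it, the proposition as stated fails.
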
  
\begin{proof}
   We check if the equation $F(x)=0$, where $F$ is given by~\eqref{funss}, has a solution $\bar x\in (0,1)$ for fixed $\tau_{D}$ and a small $\tau_C$. 
   It is easy to see that the function $F$ is continuous with respect to $\tau_C$ and $x$ for $\tau_{C}<\tau_D$. Thus, it is enough to check the existence of 
   a solution to $F(x)=0$ for $\tau_C=0$. In this case, the function $F$ simplifies to
   \[
   F_0(x) = \frac{1}{\tau_D}\ln\left(\frac{c x}{a x + b(1-x)}\right) +1 - \bigl(a x + b(1-x)\bigr).
   \]
   We calculate the derivative of $F_0$, 
   \[
   F_0'(x) = \frac{b}{\tau_{D}} \cdot \frac{1}{x\bigl((a-b)x+b\bigr)} - (a-b).
   \]
   We see that it is a decreasing function of $x\in (0,1)$ (as $a>b$) and therefore $F_0$ is concave. Moreover, 
   $\lim\limits_{x\to 0+} F_0'(x) = +\infty$ thus 
   \begin{enumerate}
      \item if $\displaystyle\tau_{D}\le \frac{b}{a(a-b)}$, then $F_0$ is increasing in $(0,1)$;
      \item if $\displaystyle\tau_{D} > \frac{b}{a(a-b)}$, then $F_0$ has exactly one maximum at $(0,1)$ at the point 
      \[
      x_{\max{}} = \frac{b}{2(a-b)}\left(\sqrt{1+\frac{4}{b\tau_D}}-1\right).
      \]
   \end{enumerate}
   Let us consider the first case. Here $F_0'(x)>0$ for all $x\in (0,1)$ because it is decreasing and $F_0'(1)>0$. Thus $F_0$ is an increasing function of $x$ and as $\lim\limits_{x\to0+} F_0(x)=-\infty$ it has a zero in the interval $(0,1)$ if and only if $F(1)>1$. An easy calculation allows to derive the condition \textbf{(a)}.
   
   Now let us consider the second case. Some algebraic manipulations lead to 
   \[
   F_0(x_{\max{}}) = \frac{1}{\tau_D} \left(\ln \frac{c}{a-b} + \ln\biggl(\frac{4}{b\tau_D}\biggr) - 2\ln\left(1+\sqrt{1+\frac{4}{b\tau_D}}\right)\right)
   +1-\frac{b}{2}\left(1+\sqrt{1+\frac{4}{b\tau_D}}\right).
   \]
   We show that $F_0(x_{\max})$ approaches its supremum either for $\tau_D\to+\infty$ or for $\tau_D \to \frac{b}{a(a-b)}$. Let us introduce
   a new variable $z = \frac{4}{b\tau_D}$ and denote $\tilde c = \frac{c}{a-b}$. Now, writing $F_0(x_{\max{}})$ in variable $z$ we have 
   \[
   F_0(x_{\max{}}) = h(z) = \frac{b}{4}z \Bigl( \ln c + \ln z - 2\ln\bigl(1+\sqrt{1+z}\bigr)\Bigr) +1 - \frac{b}{2}\Bigl(1+\sqrt{1+z}\Bigr),
   \]
   where $\displaystyle 0<z< \frac{4(a-b)b}{b^2}$ as $\displaystyle \tau_{D} > \frac{b}{(a-b)b}$.
   We calculate the derivative of $h$ with respect to $z$ and obtain
   \[
   h'(z) = \frac{b}{4}\ln \left(\frac{c z}{(1+\sqrt{1+z})^2}\right).
   \]
   Now it is easy to see that $h'$ has at most one zero for $z>0$ and it is negative for $z$ close to $0$. Hence, the function $h$ is decreasing for 
   small $z$ and it may increase for large $z$ having at most one minimum for $z>0$. Thus it approaches its supremum either for $z\to 0$ (i.e. $\tau_D\to +\infty$) 
   or for $z\to  \frac{4(a-b)b}{b^2}$. In the latter case we have 
   \[
   \tau_{D}\to \frac{b}{a(a-b)}\; \Longrightarrow \; x_{\max{}}  \to 1,
   \]
   and we arrive at the case considered earlier. On the other hand,
   \[
   \lim_{z\to 0+} h(z) = 1-b < 0
   \]
   if the condition \textbf{(b)} holds. Thus, $F_0(x)<0$ for all $x\in(0,1)$ and no interior stationary state exists.
\end{proof}
\begin{rem}\label{rem:d0}
   Note that if  $\displaystyle \tau_{D}\le \frac{b}{a(a-b)}$ and either $a<1$ or  $\displaystyle \tau_{D}< \frac{\ln\left(\frac{c}{a}\right)}{a-1}$, then
   there exists exactly one interior stationary state $\bar x\in (0,1)$ for $\tau_C<\tau_D$ close enough to~$0$. On the other hand, if 
   $\displaystyle \tau_{D}> \frac{b}{a(a-b)}$ and $b<1$, then there exist one or two interior stationary states $\bar x\in (0,1)$ for $\tau_C<\tau_D$ close enough 
   to~$0$ if $\tau_D$ is large enough. In fact, if $a<1$ for a sufficiently large $\tau_D$, then there exist two stationary states $\bar x\in (0,1)$. 
\end{rem}

If $\tau_{C}$ is small enough (for a fixed $\tau_{D}$) or, reversely, if $\tau_{D}$ is large enough (for a fixed $\tau_C$), there exists no interior stationary state. Thus, it is possible to calculate these threshold values $\tau_{C}^*$ and $\tau_{D}^*$. It can be seen that the interior stationary state disappears when it merges with the stationary state $1$. Thus, looking for $\tau_{C}^*$ and $\tau_{D}^*$ such that $F(1)=0$, after some algebraic calculations we obtain the implicit formula
\begin{equation}\label{tauAstar}
   \tau_{C}^* = -\frac{(\tau_D^*-\tau_{C}^*)\ln\left(\frac{\ln(c/a)}{4(\tau_D^*-\tau_{C}^*)}+\frac{1}{a}\right)}{\ln(c/a)}.   
\end{equation}
We have that  for $\tau_C<\tau_{C}^* $ or for $\tau_D>\tau_{D}^*$ there exists no interior stationary state. 
\vspace{2mm}

\subsection{Calculations concerning Stag-hunt type game}\label{app2}
\vspace{1mm}

In this part of Appendix we provide some details of calculations that allowed us to formulate statements about the slope of the curve $(\tau_{C}^*,\tau_{D}^*)$ given by formula (18) in the paper. Let us remind this formula here, 
\begin{equation}\label{StagHuntGranica}
   \tau_{D}^* = \tau_{C}^*\left(1+\frac{\ln \beta}{W_p(a\tau_{C}^*\e^{\tau_C^*})-\tau_C^*}\right).
\end{equation}
First note, that $\tau_D^*>0$ for $\tau_C^* > \tau_{A,0}^*$, where 
\begin{equation}\label{tauA0def}
   \tau_{A,0}^* = \frac{-\ln \beta}{a\beta -1}.
\end{equation}
We also have $\tau_D^*=0$ for $\tau_C^* = \tau_{A,0}^*$. We have then
\begin{equation}\label{tauA0}
   W_p(a\tau_{A,0}^*\e^{\tau_{A,0}^*}) = \tau_{A,0}^*- \ln\beta.
\end{equation}

We calculate two quantities: 
\begin{enumerate}
   \item the derivative of the right-hand side of~\eqref{StagHuntGranica} with respect to $\tau_C^*$ at the point $\tau_{A,0}^*$;
   \item the slope of the right-hand side of~\eqref{StagHuntGranica} for $\tau_C^*\to+\infty$. 
\end{enumerate}

Here we calculate $g'(\tau_{A,0}^*)$, where 
\[
g(\tau_{C}^*) = \tau_{C}^*\left(1+\frac{\ln \beta}{W_p(y(\tau_{C}^*))-\tau_C^*}\right), \quad y(\tau_{C}^*) = a\tau_{C}^*\e^{\tau_C^*}.
\]
We have 
\begin{equation}\label{gprim}
   g'(\tau_{A,0}^*) = \left(1+\frac{\ln \beta}{W_p(y(\tau_{A,0}^*))-\tau_{A,0}^*}\right) - \frac{\tau_{A,0}^*\ln \beta}{\bigl(W_p(y(\tau_{A,0}^*))-\tau_{A,0}^*\bigr)^2}
   \Bigl(W_p'(y(\tau_{A,0}^*))y'(\tau_{A,0}^*)-1\Bigr).
\end{equation}
Because of~\eqref{tauA0}, the first parenthesis of~\eqref{gprim} is equal to 0. Moreover, $W_p(y(\tau_{A,0}^*))-\tau_{A,0}^*=-\ln \beta$. Thus, the formula~\eqref{gprim}
simplifies to
\begin{equation}\label{gprim1}
   g'(\tau_{A,0}^*) = - \frac{\tau_{A,0}^*}{\ln \beta}
   \Bigl(W_p'(y(\tau_{A,0}^*))y'(\tau_{A,0}^*)-1\Bigr).
\end{equation}
Now we calculate the expression in the parenthesis. In order to shorten the notation we write $y$ and $y'$ omitting dependence of $y$ on $\tau_{A,0}$. 
Using~\eqref{tauA0} and~\eqref{tauA0def} we get 
\[
W_p'(y) = \frac{W_p(y)}{y(1+W_p(y))} = \frac{-\frac{ln\beta}{a\beta-1}-\ln\beta}{y \frac{a\beta-1-a\beta\ln \beta}{a\beta-1}}=\frac{-a\beta \ln \beta}{y\bigl(a\beta-1-a\beta\ln \beta\bigr)} = \frac{a\beta-1}{a\beta-1-a\beta\ln\beta}\beta^{a\beta}.
\]
because 
\[
y = a\tau_{A,0}\e^{\tau_{A,0}^*} = \frac{-a\ln \beta}{a\beta-1} \beta^{1-a\beta}.
\]
On the other hand we have 
\[
y'(\tau_{A,0}^*) = \bigl(1+\tau_{A,0}^*\bigr)\e^{\tau_{A,0}^*} = \frac{\beta-1-\ln\beta}{a\beta-1}\beta^{1-a\beta}.
\]
Plugging formulas for $W_p'(y)$ and $y'$ into~\eqref{gprim1} we get the final formula 
\begin{equation}
   g'(\tau_{A,0}^*) = \frac{a\beta-1}{a\beta-1-a\beta\ln\beta}.
\end{equation}

In order to calculate the slope of the right-hand side of~\eqref{StagHuntGranica} for $\tau_C^*\to+\infty$ we note (see~\cite[Eq.~4.13.10]{DLMF}) that 
\[
W_p(y) = \ln y - \ln\ln y + r(y), \quad \text{ where } \quad r(y)\to 0 \text{ as } y\to+\infty. 
\]
We have then
\[
\begin{split}
   W_p\left(a\tau_C^*\e^{\tau_C^*}\right) - \tau_{C}^* &= \ln \left(a\tau_C^*\e^{\tau_C^*}\right) - \ln\ln\left(a\tau_C^*\e^{\tau_C^*}\right) + r\left(a\tau_C^*\e^{\tau_C^*}\right) - \tau_{C}^* \\      
   & = \ln a + \ln \frac{\tau_{C}^*}{\tau_C^*+\ln a+\ln\tau_{C}^*} + r\left(a\tau_C^*\e^{\tau_C^*}\right)\\
   & \to \ln a
\end{split}
\]
as $\tau_C^*\to +\infty$. Thus, the slope of the right-hand side of~\eqref{StagHuntGranica} for $\tau_C^*\to+\infty$ is equal to $1+\frac{\ln \beta}{\ln a}$. 

\section*{Acknowledgments}
We thank the National Science Centre, Poland, for a~financial support under the grant no.~2015/17/B/ST1/00693.  
We would like to thank two anonymous referees for valuable comments and suggestions which greatly improved our paper.

%\nolinenumbers

% Either type in your references using
% \begin{thebibliography}{}
% \bibitem{}
% Text
% \end{thebibliography}
%
% or
%
% Compile your BiBTeX database using our plos2015.bst
% style file and paste the contents of your .bbl file
% here. See http://journals.plos.org/plosone/s/latex for 
% step-by-step instructions.
% 

\end{document}